\newtheorem{thm}{Theorem}
\newtheorem{example}{Example}
\newtheorem{lem}{Lemma}
\def\BibTeX{{\rm B\kern-.05em{\sc i\kern-.025em b}\kern-.08em
    T\kern-.1667em\lower.7ex\hbox{E}\kern-.125emX}}
\begin{document}

\title{A Hybrid Scattering Transform for Signals with Isolated Singularities\\
\thanks{J.H. was supported by the National Science Foundation (grant \#1845856). M.I. was supported in part by NSF DMS 1912706. M.H. acknowledges funding from the National Institutes of Health (grant \#R01GM135929), the National Science Foundation (CAREER award \#1845856), and the Department of Energy (grant \#DE-SC0021152).}
}

\author{\IEEEauthorblockN{Michael Perlmutter}
\IEEEauthorblockA{\textit{Dept. of Mathematics} \\
\textit{University of California, Los Angeles}\\
Los Angeles, CA, USA \\
perlmutter@math.ucla.edu}
\and
\IEEEauthorblockN{Jieqian He}
\IEEEauthorblockA{\textit{Dept. of CMSE} \\
\textit{Michigan State University}\\
East Lansing, MI, USA \\
hejieqia@msu.edu}
\and
\IEEEauthorblockN{Mark Iwen}
\IEEEauthorblockA{\textit{Dept. of Mathematics} \\
\textit{Dept. of CMSE} \\
\textit{Michigan State University}\\
East Lansing, MI, USA \\
iwenmark@msu.edu}
\and

\IEEEauthorblockN{Matthew Hirn}
\IEEEauthorblockA{\textit{Dept. of CMSE} \\
\textit{Dept. of Mathematics}
 \\
\textit{Michigan State University}\\
East Lansing, MI, USA \\
mhirn@msu.edu}
}

\maketitle

\begin{abstract}
The scattering transform is a wavelet-based model of Convolutional Neural Networks originally introduced by S. Mallat. Mallat's analysis shows that this network has desirable stability and invariance guarantees and therefore helps explain the observation that the filters learned by early layers of a Convolutional Neural Network typically resemble wavelets. Our aim is to understand what sort of filters should be used in the later layers of the network. Towards this end,  we propose a two-layer hybrid scattering transform. In our first layer, we convolve the input signal with a wavelet filter transform to promote sparsity, and, in the second layer, we convolve with a Gabor filter to leverage the sparsity created by the first layer. We show that these measurements characterize information about signals with isolated singularities. We also show that the Gabor measurements used in the second layer can be used to synthesize sparse signals such as those produced by the first layer.
\end{abstract}

\begin{IEEEkeywords}
scattering transforms, wavelets, sparsity, deep learning, time-frequency analysis
\end{IEEEkeywords}

\section{Introduction}\label{sec: intro}

The wavelet scattering transform is a mathematical model of Convolutional Neural Networks (CNNs)  introduced by  S. Mallat\cite{mallat:scattering2012}. Analogously to the feed-forward portion of a CNN, it produces a latent representation of an input signal via  an alternating sequence of filter convolutions and nonlinearities. It differs, 
most notably, by using predesigned wavelet filters rather than filters learned from data.

Using predefined filters allows for rigorous  analysis and helps us understand why  a deep nonlinear network is better than a wide, shallow, linear  network with the same number of parameters. Ideally, a feed-forward  network should produce a representation which is sufficiently descriptive for downstream tasks, but also stable to deformations such as translations. Linear networks are typically unable to do both and often must discard high-frequency information to achieve stability. Mallat's analysis in \cite{mallat:scattering2012} shows that the scattering transform, on the other hand, captures high-frequency information via  wavelets and then pushes it down to lower, more stable, frequencies using a nonlinear activation function. Thus, the nonlinear structure enables the network to stably capture high-frequency information.

 The scattering transform also helps us understand which filters are useful for effectively encoding information. While the optimal choice is task dependent, wavelets are  often a good choice since natural images are typically sparse in the wavelet basis and as discussed above, they are able to capture high-frequency information.  Moreover, and perhaps most importantly, the filters learned in the early layers of CNNs typically resemble wavelets. 

This paper focuses on the choice of filters for later layers of the network. In particular,
we propose a two-layer hybrid scattering model. In the first layer, we  use a wavelet convolution to sparsify the input. Then, we use a Gabor type filter to leverage this sparsity.

\begin{figure}
    \centering
\includegraphics[scale=0.33]{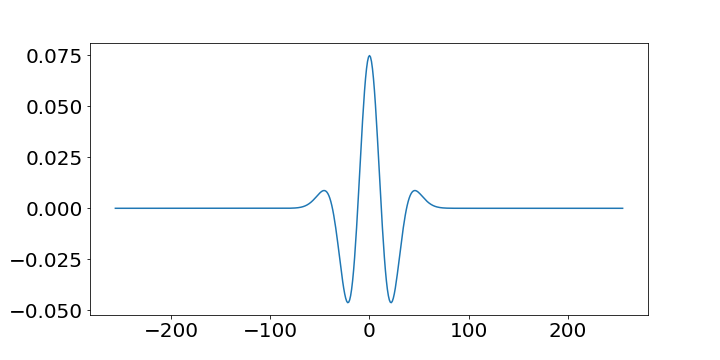}\caption{Wavelet filter used in the first layer}
\end{figure}

 For simplicity, we assume that the input $y(t)$ is a piecewise polynomial whose knots are located at points $\{u_i\}_{i=1}^k$ $u_i<u_{i+1}$. 
 We shall also assume that each of its piecewise components $y_i(t)$  has degree at most $m$. We  let $\psi$ be a mother wavelet with  $\text{supp}(\psi)\subseteq [-1,1]$ and let 
\begin{equation*}
    \psi_\ell(t)=\frac{1}{2^\ell}\psi\left(\frac{t}{2^\ell}\right).
\end{equation*}
We will assume that $\psi$ has $m+1$ vanishing moments, which implies that $\psi_\ell\star y_i(t)=0$ (see e.g. \cite{Mallat:2008:WTS:1525499}).
It follows that    $\text{supp}(\psi_\ell \star y)$ is contained in $\cup_{i=1}^k[u_i-2^\ell,u_i+2^\ell]$.  To further promote sparsity, we next apply a max-pooling operator:
\begin{equation*}
    MP_\ell z(t) = \begin{cases}
    z(t) &\text{if } z(t) =\max_{t'\in [t_i-2^\ell,t_i+2^\ell]\cap h\mathbb{Z}}z(t')\\
    0 &\text{otherwise}
    \end{cases}.
\end{equation*}
As summarized in the following theorem, this yields a  linear combination of Dirac delta functions. 

\begin{figure}
    \centering
    \subfloat[piecewise polynomial $y(t)$]{\label{fig:syn_directed}\includegraphics[scale=0.2]{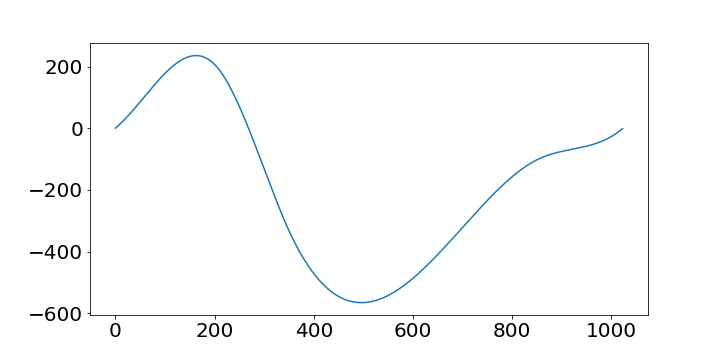}}
       \hfill \subfloat[Convolution against wavelet $\psi_\ell$]{\label{fig:syn_directed}\includegraphics[scale=0.2]{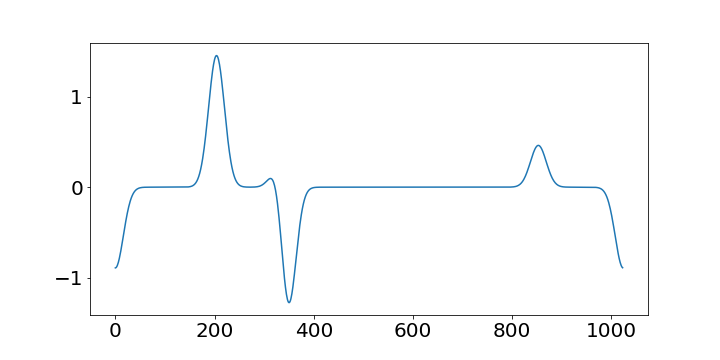}}\hfill
    \subfloat[Sparse Signal from Max Pooling]{\label{fig:syn_sym}\includegraphics[scale=0.2]{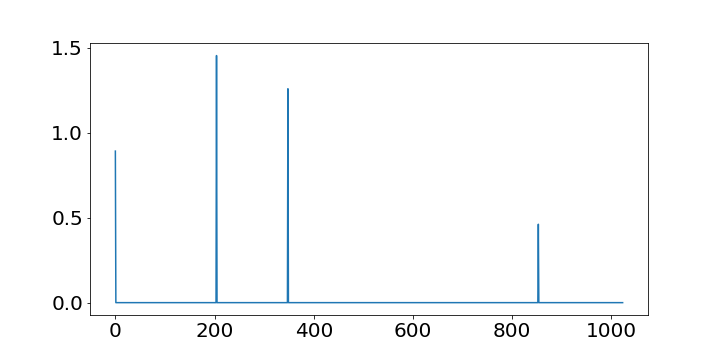}}
    \caption{Wavelets sparsify piecewise polynomials on the interval $[0,1024h]$.}
    \label{fig:diracs}
\end{figure}

\begin{thm}\label{thm: make sparse}
Assume that $2^{\ell+1} \leq \min_{i\neq i'}|u_i-u_{i'}|$. Then,  
\begin{equation*}
    MP_\ell(|\psi_\ell\star y|)(t)=\sum_{j=1}^k a_j \delta_{v_j}(t).
\end{equation*}
for some $a_1,\ldots,a_k> 0,$   $v_j\in [u_j-2^\ell,u_j+2^\ell], 1\leq j\leq k.$
\end{thm}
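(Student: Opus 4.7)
The plan is to prove Theorem~\ref{thm: make sparse} in two stages: first localize the support of $\psi_\ell \star y$ via the vanishing-moments hypothesis, then analyze how $MP_\ell$ acts on the resulting non-negative signal.

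For the support step, I would rigorously verify that $\text{supp}(\psi_\ell \star y) \subseteq \bigcup_{j=1}^{k} I_j$ with $I_j := [u_j - 2^\ell, u_j + 2^\ell]$. Since $\text{supp}(\psi_\ell) \subseteq [-2^\ell, 2^\ell]$, the value $(\psi_\ell \star y)(t)$ depends only on the restriction of $y$ to $[t - 2^\ell, t + 2^\ell]$; if this interval avoids every knot, then $y$ agrees there with a single polynomial piece $y_i$ of degree $\leq m$, so the $m+1$ vanishing moments of $\psi$ force $(\psi_\ell \star y_i)(t) = 0$. The separation hypothesis $2^{\ell+1} \leq \min_{j \ne j'}|u_j - u_{j'}|$ then renders the $I_j$ pairwise (essentially) disjoint and guarantees that the interval $[u_j - 2^{\ell+1}, u_j + 2^{\ell+1}]$ contains no knot other than $u_j$ in its interior.

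For the max-pooling step, set $z := |\psi_\ell \star y|$. Since $z \geq 0$ is supported on $\bigcup_j I_j$, only grid points in $\bigcup_j I_j$ can contribute to $MP_\ell z$. I would choose $v_j \in I_j \cap h\mathbb{Z}$ to be a point at which $z$ attains its maximum on $I_j$. The max-pool window $[v_j - 2^\ell, v_j + 2^\ell]$ lies inside $[u_j - 2^{\ell+1}, u_j + 2^{\ell+1}]$, which contains no other knot in its interior, so essentially all non-zero $z$-values it sees come from $I_j$. Consequently $z(v_j)$ is the max over the window, and setting $a_j := z(v_j)$ gives $MP_\ell z(v_j) = a_j$; an analogous argument shows $MP_\ell z$ vanishes at all other grid points. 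Positivity $a_j > 0$ holds whenever the polynomial pieces adjacent to $u_j$ differ (the non-degenerate case), yielding the identity $MP_\ell z = \sum_{j=1}^k a_j \delta_{v_j}$ in the sparse-signal sense intended by the authors.

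The chief technical obstacle I anticipate is the boundary configuration $u_{j+1} - u_j = 2^{\ell+1}$ (equality) in which $I_j$ and $I_{j+1}$ touch at a single point, so that a maximizer $v_j$ in the right half of $I_j$ can have a window reaching into a non-trivial sliver of $I_{j+1}$. I would handle this case separately, exploiting that $z$ vanishes exactly at the shared boundary point (the convolution integral there sees only a single polynomial piece on its open support, apart from the two knots lying on a measure-zero set) and rises continuously from the seam, so that the $z$-values in the captured sliver of $I_{j+1}$ can be controlled against $z(v_j)$ by a local comparison on either side. A secondary subtlety, implicit in the stated conclusion, is the uniqueness of the discrete maximizer $v_j$ in $I_j \cap h\mathbb{Z}$; this would follow from a mild genericity of the phase of the grid $h\mathbb{Z}$ relative to $\psi$ and $y$.
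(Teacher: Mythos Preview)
The paper does not actually supply a proof of Theorem~\ref{thm: make sparse}. It is stated in the introduction as a summary of the preceding heuristic discussion (vanishing moments $\Rightarrow$ $\operatorname{supp}(\psi_\ell\star y)\subseteq\bigcup_j[u_j-2^\ell,u_j+2^\ell]$, then max-pool), and the remainder of the paper is devoted to Theorems~\ref{thm: main support}--\ref{thm: heights} and the lemmas of Section~\ref{sec: exppoly}. So there is no paper proof to compare your proposal against; your outline is already more detailed than anything the authors provide.

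That said, your proposal contains a genuine gap at the step ``an analogous argument shows $MP_\ell z$ vanishes at all other grid points.'' Your argument that $MP_\ell z(v_j)=a_j$ uses that $v_j$ is the global maximizer of $z$ on $I_j$ and that the window $[v_j-2^\ell,v_j+2^\ell]$ sees essentially only $I_j$. But for another grid point $t\in I_j$, the window $[t-2^\ell,t+2^\ell]$ has radius $2^\ell$ while $I_j$ has diameter $2^{\ell+1}$, so there is no guarantee that $v_j$ lies in that window. If $z=|\psi_\ell\star y|$ happens to have two well-separated local maxima in $I_j$ (which is not ruled out by anything you or the paper assume about $\psi$ or the jump at $u_j$), both can survive max-pooling and the output is not a single Dirac per knot. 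Your ``genericity of the grid phase'' remark handles ties at the \emph{global} maximum but does nothing for additional local maxima. To close this, you would need an extra structural input---e.g., unimodality of $|\psi_\ell\star y|$ on each $I_j$, which holds for specific wavelet/singularity pairs but is not a consequence of the stated hypotheses---or else weaken the conclusion to ``at most a bounded number of Diracs per $I_j$.'' The paper is silent on this point, so the theorem as written appears to rely on an implicit assumption of this kind.
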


\begin{figure}
    \centering
    \subfloat[Indicator function window]{\label{fig:syn_sym}\includegraphics[scale=0.32]{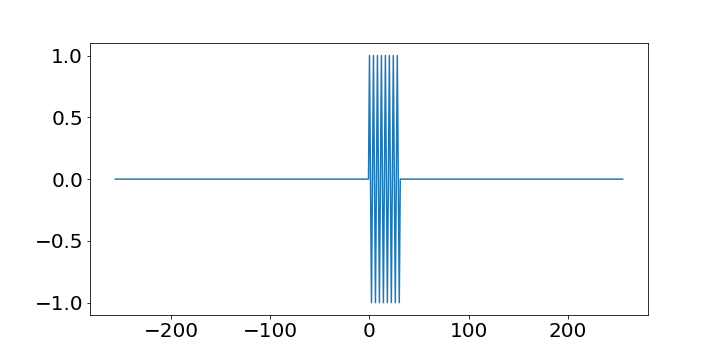}}
    \hfill
    \subfloat[Gaussian window]{\label{fig:syn_sym}\includegraphics[scale=0.32]{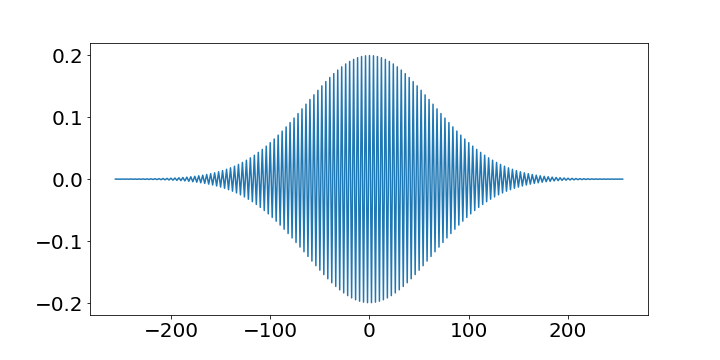}}
    \caption{Gabor filters used in the second layer (real parts)}
    \label{fig:diracs}
\end{figure}

In our second layer, rather than  another wavelet, we  use a Gabor filter 
\begin{equation}\label{eqn: gabor}
  g_{s,\xi} = w\left(\frac{t}{s}\right)\mathbbm{e}^{\mathbbm{i}\xi t},
\end{equation} where the parameters $s$ and $\xi$ determine the scale and central frequency and the window function $w$ is supported on an interval of unit length. Next, we take the $L^p$ norm for some integer $p\geq 1.$ As a result, we obtain translation invariant  \textit{hybrid scattering coefficients}
\begin{equation*}
    \|g_{s,\xi}\star MP_\ell(|\psi_\ell\star y|)\|_p.
\end{equation*}

By design, these measurements are invariant to translations, reflections, and global sign changes. We aim to investigate the ability of our measurements to characterize $y$ up to  these natural ambiguities. 
 The wavelet-modulus is known to be a powerful signal descriptor\cite{mallat:waveletPhaseRetrieval2015}.
Therefore, in light of Theorem \ref{thm: make sparse}, we shall analyze the ability of the  measurements   
\begin{equation}\label{eqn: justgabor}
  f_\xi(s)[x]\coloneqq \|g_{s,\xi}\star x\|_p
    \end{equation}
to characterize signals of the form 
\begin{equation}\label{eqn: diractrain}
x(t)=\sum_{j=1}^k a_j \delta_{v_j}(t).
\end{equation}
For such a signal, we will let $\vec{a}$ be the vector] defined by $\vec{a}=(a_1,\ldots,a_k)$ and let $\|\vec{a}\|_p$ denote its $\ell^p$ norm.

 To supplement our theory, we will  show that the measurements \eqref{eqn: justgabor} can be used to reconstruct a sparse signal of the form \eqref{eqn: diractrain} up to translations, reflections and global sign changes in Section \ref{sec: numerics}.

We will show that our measurements characterize the support set $\{v_j\}_{j=1}^k$. For $i<j$, we let $\Delta_{i,j}=v_j-v_i$ and consider the difference set 
\begin{equation*}
\mathcal{D}(x)\coloneqq\{\Delta_{i,j}: 1\leq  i<j\leq k\}.
\end{equation*} We will assume that $x$ is \textit{collision free}, i.e., that  $\Delta_{i,j}\neq \Delta_{i',j'}$ except for when $(i,j)=(i',j')$ and that $\mathcal{D}(x)$ is contained in a fine grid, $h\mathbb{Z}=\{hn:n\in\mathbb{Z}\}$ for some $h>0$. Under these assumptions, it is known \cite{Bekir2007, DBLP:journals/corr/RanieriCLV13} that the support set $\{v_j\}_{j=1}^k$ is  determined (up to reflection and translation) by $\mathcal{D}(x)$ except for in the case where $k=6$ and the $\{v_j\}_{j=1}^6$ belong to a specific parametric family. (See Theorem 1 of \cite{DBLP:journals/corr/RanieriCLV13} for full details. For the remainder of this work, we will assume that $\{v_j\}_{j=1}^k$  does not belong to this family and therefore the support set $\{v_j\}_{j=1}^k$ is determined by $\mathcal{D}(x).$)
 This motivates the following theorem which shows that the measurements \eqref{eqn: justgabor} uniquely determine $\mathcal{D}(x)$. 
\begin{thm}\label{thm: main support}
 Let $p\geq 1$ be an integer and let $w(t)=1_{[0,1]}(t)$.  Then for almost every $\xi$, the function 
\begin{equation*}f_\xi(s)=\|g_{s,\xi}\star x\|_p
\end{equation*}
   is piecewise linear. Morover, the set of its isolated singularities is exactly the support set $\mathcal{D}(x)$.
\end{thm}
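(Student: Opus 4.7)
The plan is to compute the convolution explicitly, reduce the $L^p$ norm to a finite sum over sub-intervals on which the integrand is constant in $t$, and then track the dependence on $s$. Inserting the indicator-window filter $g_{s,\xi}(t)=1_{[0,s]}(t)e^{\mathbbm{i}\xi t}$ gives
\begin{equation*}
(g_{s,\xi}\star x)(t) = e^{\mathbbm{i}\xi t}\sum_{j\in J(t)}a_j\,e^{-\mathbbm{i}\xi v_j},\qquad J(t):=\{j:v_j\le t\le v_j+s\},
\end{equation*}
so taking the modulus leaves an integrand $|A_{J(t)}|^p$, where $A_J := \sum_{j\in J} a_j e^{-\mathbbm{i}\xi v_j}$ depends only on the subset $J(t)\subseteq\{1,\dots,k\}$ and is in particular independent of $s$ and $t$.

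I then partition the $t$-axis by the $2k$ break points $\{v_j\}\cup\{v_j+s\}$, on each of whose consecutive sub-intervals $J(t)$ is constant. This writes $f_\xi(s)^p$ as a finite sum $\sum_i L_i(s)\,|A_{J_i}|^p$ in which each length $L_i(s)$ is affine in $s$ with slope in $\{-1,0,+1\}$. The combinatorial data $\{(L_i,J_i)\}$ is locally constant in $s$ as long as the sorted ordering of the $2k$ break points is preserved, and, by the collision-free hypothesis, that ordering changes via a single transposition exactly when $s$ crosses an element of $\mathcal D(x)$. Hence $f_\xi(s)^p$ is piecewise affine in $s$ with potential breakpoints contained in $\mathcal D(x)$, and taking the $p$-th root gives the piecewise-smooth structure of $f_\xi$ with the same potential singular locus.

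The remaining step, and the main obstacle, is to rule out fake breakpoints: I must show each $\Delta=\Delta_{i_0,j_0}\in\mathcal D(x)$ is actually a genuine singularity of $f_\xi$ for almost every $\xi$. Tracking the three sub-intervals that rearrange when $v_{i_0}+s$ and $v_{j_0}$ swap shows that the slope jump of $f_\xi^p$ at $s=\Delta$ is
\begin{equation*}
G_\Delta(\xi)=|A+b_{i_0}+b_{j_0}|^p-|A+b_{i_0}|^p-|A+b_{j_0}|^p+|A|^p,
\end{equation*}
where $b_r=a_r e^{-\mathbbm{i}\xi v_r}$ and $A=A_{J_M}(\xi)$ is the amplitude on the middle sub-interval just before the swap. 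One recognises $G_\Delta$ as a second finite difference of $z\mapsto|z|^p$ along the complex increments $b_{i_0},b_{j_0}$. Because $G_\Delta$ is real-analytic in $\xi$ off the discrete set $\bigcup_J\{A_J(\xi)=0\}$, its zero set is null provided it is not identically zero, so it suffices to exhibit one $\xi$ with $G_\Delta(\xi)\ne 0$. For $p\ge 2$, evaluating at $\xi=0$ makes every $b_r$ equal to the positive real $a_r$, so $G_\Delta(0)$ is the second forward difference of the strictly convex map $x\mapsto x^p$ at positive arguments, hence strictly positive. The case $p=1$ is more delicate since $|z|$ is not strictly convex along rays and $G_\Delta(0)$ indeed vanishes; here one picks $\xi$ so that $b_{i_0}$ and $b_{j_0}$ have nearly opposite phases and invokes strict subadditivity of the modulus to force $G_\Delta(\xi)\ne 0$. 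Taking the union of the resulting null sets over the finitely many $\Delta\in\mathcal D(x)$ then identifies the isolated singularities of $f_\xi$ with $\mathcal D(x)$ for almost every $\xi$.
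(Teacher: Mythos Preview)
Your decomposition and the formula for the slope jump
\[
G_\Delta(\xi)=|A+b_{i_0}+b_{j_0}|^p-|A+b_{i_0}|^p-|A+b_{j_0}|^p+|A|^p
\]
coincide exactly with what the paper derives (its coefficients $c_{i,j}(\xi)$ in \eqref{fdoubleprimespecial}--\eqref{fdoubleprimedistinct}), so structurally you are following the same route.  The only substantive divergence is in how you argue that $G_\Delta(\xi)\ne 0$ for almost every $\xi$.  The paper treats $c_{i,j}$ as a generalized exponential polynomial and, via Lemma~\ref{lem: zerosetmeasurezero}, repeatedly squares the identity $|p_1|^p+|p_2|^p=|p_3|^p+|p_4|^p$ until every term lies in the class $\mathcal{E}$; a degree comparison then shows the resulting function is not identically zero, and membership in $\mathcal{E}$ makes it globally real-analytic so its zero set is null.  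Your convexity evaluation at $\xi=0$ is shorter and, for even $p$, entirely adequate: there $|z|^p$ is a polynomial in $\mathrm{Re}\,z,\mathrm{Im}\,z$, so $G_\Delta$ is genuinely real-analytic on all of $\mathbb{R}$ and a single nonzero value suffices.

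The gap is in the odd case (including $p=1$).  There $|z|^p$ fails to be real-analytic at $z=0$, so $G_\Delta$ is only real-analytic on each connected component of $\mathbb{R}\setminus Z$, where $Z=\bigcup_J\{A_J=0\}$ is a countable discrete set.  Exhibiting $G_\Delta(0)\ne 0$ (or, for $p=1$, some other single $\xi$) rules out identical vanishing only on the component containing that point; it says nothing about the other components, on any one of which $G_\Delta$ could in principle vanish identically and contribute an interval of positive measure to the zero set.  Continuity of $G_\Delta$ across $Z$ does not propagate non-vanishing, because the non-analyticity at points of $Z$ is precisely what allows a piecewise real-analytic function to be identically zero on one side and not the other.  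To close this you need a global argument---and the natural one is exactly the paper's: square enough times so that every surviving term has an even exponent and hence lies in $\mathcal{E}$, then use the strict dominance of $\deg\beta_{i,j}=\Delta_{i,j}$ over the other three degrees to see the resulting expression is a nonzero element of $\mathcal{E}$, whose zero set is therefore null.
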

Theorem \ref{thm: piecewiselinear} shows that selecting a single random frequency and enough scales $\{s_j\}$ such that there is one $s_j$ in between each element of $\mathcal{D}(x)$ allows us detect the location of each point of $\mathcal{D}(x)$ by evaluating $f_\xi(s)$ at each of the $s_j$  (up to a precision corresponding to the density of the scales). 
The next result shows that the amplitudes $a_j$ can also be recovered with $\mathcal{O}(p)$ randomly chosen frequencies. Thus, the measurements \eqref{eqn: justgabor} characterize sparse signals up to natural ambiguities. 
\begin{thm}\label{thm: heights}
Let $w(t)=1_{[0,1]}(t)$ and, 
let \begin{equation*}
x(t)=\sum_{j=1}^k a_j \delta_{v_j}(t)
\end{equation*} be a sparse signal of the form \eqref{eqn: diractrain}.
Let $\xi_1,\ldots,\xi_{L}$ be i.i.d. standard normal random variables, where $L$ is assumed to be at least $p+2$ if $p$ is even and at least $4p+2$ if $p$ is odd. Then the following uniqueness result holds almost surely:

Let \begin{equation*}
 \widetilde{x}(t)=\sum_{j=1}^k \widetilde{a}_j \delta_{\widetilde{v}_j}(t).
\end{equation*} Suppose that $\|\vec{a}\|_p=\|\vec{\widetilde{a}} \|_p,$ that $\mathcal{D}(\widetilde{x})=\mathcal{D}(x)$, and that
$$\partial_s^2f_{\xi_\ell}[x](d)=\partial_s^2f_{\xi_\ell}[\widetilde{x}](d)$$ for all $d\in\mathcal{D}(x)$ and all $1\leq \ell \leq 4p+2$.

Then we have that $\vec{\widetilde{a}}=\pm\vec{\widetilde{a}}$, and therefore $\widetilde{x}(t)$ is equivalent to $ x(t)$ up to translation,  reflection, and global sign change.
\end{thm}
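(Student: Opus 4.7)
The plan is three-fold: reduce to aligned supports, localize the distributional second derivative at each consecutive gap to a pair of consecutive Diracs, and extract $\vec a$ up to sign via a polynomial identity in $\cos(\xi d_i)$. For the reduction, since $\mathcal{D}(\widetilde x)=\mathcal{D}(x)$ and the configuration is not in the exceptional family, the support $\{\widetilde v_j\}$ agrees with $\{v_j\}$ up to translation and reflection. Because $f_\xi[x]$ is invariant under translation and covariant under reflection (which sends $\xi\mapsto -\xi$, preserving the i.i.d.\ Gaussian law of $\{\xi_\ell\}$ in distribution), I absorb the appropriate isometry into $\widetilde x$ and assume $\widetilde v_j=v_j$ for all $j$, indexed in increasing order, reducing the goal to $\vec{\widetilde a}=\pm\vec a$.

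I then analyze the piecewise linear function $g_\xi(s):=\int|g_{s,\xi}\star x(t)|^p\,dt$, which decomposes as $g_\xi(s)=\sum_S\phi_S(\xi)\mu_S(s)$, where $\phi_S(\xi)=|\sum_{j\in S}a_je^{-i\xi v_j}|^p$ is the constant value of $|g_{s,\xi}\star x|^p$ on the cell with active set $S$, and $\mu_S$ is a piecewise linear cell-length function. The key claim, which I verify by tracing the interval arrangement, is that at a consecutive gap $d_i:=v_{i+1}-v_i$ (unique under the collision-free hypothesis) only three cells reconfigure: a new $\{i,i+1\}$-cell opens of length $s-d_i$, while the solo $\{i\}$- and $\{i+1\}$-cells freeze in length; every cell whose active set contains an index outside $\{i,i+1\}$ slides smoothly across $s=d_i$. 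This yields the slope jump
\begin{equation*}
\Delta g_\xi(d_i)=|a_i+a_{i+1}e^{-i\xi d_i}|^p-|a_i|^p-|a_{i+1}|^p,
\end{equation*}
a function of $a_i$ and $a_{i+1}$ alone. Processing the $d_i$'s in increasing order of magnitude, an easy induction using the norm condition and the invariance of $|a_j|^p$ under the sign flip being tracked shows $g_\xi[x](d_i)=g_\xi[\widetilde x](d_i)$, so the hypothesis translates via the chain rule into $\Delta g_\xi[x](d_i)=\Delta g_\xi[\widetilde x](d_i)$.

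Setting $c=\cos(\xi d_i)$ and $X=a_i^2+a_{i+1}^2+2a_ia_{i+1}c$, the jump equals $X^{p/2}-|a_i|^p-|a_{i+1}|^p$. For $p$ even, this is a polynomial of degree $p/2$ in $c$; equating it to the analogue for $\vec{\widetilde a}$ at $L\geq p+2$ almost surely distinct values $c_\ell=\cos(\xi_\ell d_i)$ forces a polynomial identity in $c$. Matching the leading and subleading coefficients (and invoking $\|\vec a\|_p=\|\vec{\widetilde a}\|_p$ to control the constant-term ambiguity coming from the $|a_i|^p+|a_{i+1}|^p$ summand) pins down $a_ia_{i+1}$ and $a_i^2+a_{i+1}^2$, yielding $(\widetilde a_i,\widetilde a_{i+1})=\pm(a_i,a_{i+1})$ modulo a swap that is already subsumed by reflection. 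For $p$ odd, $X^{p/2}$ is irrational in $c$: multiplying the identity $X^{p/2}-\widetilde X^{p/2}=\mathrm{const}$ by $X^{p/2}+\widetilde X^{p/2}$ and then squaring twice yields a polynomial identity in $c$ of degree at most $4p$, which the bound $L\geq 4p+2$ is designed to force, after which the extraction proceeds analogously. Chaining across $i=1,\ldots,k-1$, the shared coordinate $a_{i+1}$ pins the $\pm$ sign consistently, giving $\vec{\widetilde a}=\pm\vec a$.

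I expect the main obstacle to be the combinatorial bookkeeping in the cell analysis: while the collision-free hypothesis makes it plausible that only the two Diracs at $v_i$ and $v_{i+1}$ contribute to the kink at $s=d_i$, rigorously excluding hidden contributions from cells with larger active sets---especially when another pairwise difference happens to be close to $d_i$---requires a careful enumeration of how the arrangement reconfigures. A secondary challenge is the odd-$p$ rationalization: the double-squaring step inflates the polynomial degree to $4p$ (explaining the larger sample count) and careful coefficient accounting is needed to verify that enough independent relations survive to determine $(a_i,a_{i+1})$ up to sign.
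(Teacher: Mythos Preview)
Your reduction to aligned supports and your computation of the slope jump at a consecutive gap $d_i=\Delta_{i,i+1}$ are both correct and match the paper's Theorem~\ref{thm: piecewiselinear}: only the cells $R_{i,i}$, $R_{i+1,i+1}$, $R_{i,i+1}$ contribute at $s=d_i$, and the jump is indeed $|a_i+a_{i+1}e^{\mathbbm{i}\xi d_i}|^p-|a_i|^p-|a_{i+1}|^p$. Your counting argument (polynomial identity in $\cos(\xi d_i)$ for even $p$, rationalization by squaring for odd $p$) also agrees in spirit with the paper's Lemma~\ref{lem: 4psolutions}, and correctly yields
\[
a_ia_{i+1}=\widetilde a_i\widetilde a_{i+1},\qquad a_i^2+a_{i+1}^2=\widetilde a_i^2+\widetilde a_{i+1}^2
\]
for every $i$.

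The gap is in the final chaining step. From the two relations above you only get $(\widetilde a_i,\widetilde a_{i+1})\in\{\pm(a_i,a_{i+1}),\pm(a_{i+1},a_i)\}$, and the swap option is \emph{not} ``subsumed by reflection'': a global reflection reverses the entire vector, not a single consecutive pair. Concretely, take $k=4$, a collision-free support such as $(v_1,\ldots,v_4)=(0,1,3,7)$, and $\vec a=(1,2,1,2)$. Then $\vec{\widetilde a}=(2,1,2,1)$ satisfies every consecutive-pair constraint (since $|1+2e^{\mathbbm{i}\theta}|=|2+e^{\mathbbm{i}\theta}|$ for all $\theta$) and also $\|\vec a\|_p=\|\vec{\widetilde a}\|_p$, yet $\vec{\widetilde a}\neq\pm\vec a$ and, because the support is not reflection-symmetric, $\widetilde x$ is not a reflection of $\pm x$. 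So the information at the consecutive gaps $\Delta_{i,i+1}$ together with the norm is genuinely insufficient.

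The paper closes this gap with an additional step you are missing: it evaluates the second derivative at the \emph{non-consecutive} gap $\Delta_{1,3}$, where the jump involves the three-term sum $a_1+a_2e^{\mathbbm{i}\xi\Delta_{1,2}}+a_3e^{\mathbbm{i}\xi\Delta_{1,3}}$, and invokes a separate lemma (Lemma~\ref{lem: zerosetmeasurezero2} in the paper) showing that
\[
\bigl|a_1+a_2e^{\mathbbm{i}\theta}+a_3e^{\mathbbm{i}(\gamma+1)\theta}\bigr|^p-\bigl|\kappa a_1+\tfrac{1}{\kappa}a_2e^{\mathbbm{i}\theta}+\kappa a_3e^{\mathbbm{i}(\gamma+1)\theta}\bigr|^p
\]
cannot be constant unless $\kappa=\pm1$. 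This is what eliminates the residual ``alternation'' ambiguity $\widetilde a_i=c^{(-1)^i}a_i$ with $c\neq\pm1$ that survives your consecutive-gap analysis. Your proposal needs an analogue of this three-term step; without it the argument does not close.
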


\section{Generalized Exponential Polynomials}\label{sec: exppoly}
In this section, we will introduce some notation and state some lemmas that are needed in order to prove Theorems \ref{thm: main support} and \ref{thm: heights}. 
For the proof of  the lemmas in this section, please see section \ref{sec: proof aux}. 

We let $\mathcal{E}$ denote the set of functions that can be written as
\begin{equation}\label{defE}
p(\theta) = \sum_{k=1}^N \alpha_k\mathbbm{e}^{\mathbbm{i}\gamma_k\theta}
\end{equation}
where $N\geq 1,$ $\alpha_k, \gamma_k\in\mathbb{R},$ $\alpha_k\neq 0,$ and $\gamma_1<\gamma_2<\ldots<\gamma_N$. Since the $\gamma_k$ are allowed to be arbitrary (possibly negative or irrational) real numbers,  we call these functions {\it{generalized exponential polynomials}}. For  $p\in\mathcal{E},$  we refer to $\gamma_N$ as the degree of $p$. We let $\mathcal{E}(d)$ refer to the set of all $p\in\mathcal{E}$ with $\text{degree}(p)=d,$ 
and let $\mathcal{E}_0(d)$ denote the set of $p\in\mathcal{E}(d)$ such that $\gamma_1= 0.$

The following lemma shows that each $p\in\mathcal{E}$ has a unique representation as the sum of exponentials, and that therefore, the degree of $p$ is well defined.

\begin{lem}\label{lem: welldefE}
Let $p,q\in\mathcal{E},$ with
\begin{equation*}
p(\theta) = \sum_{k=1}^N \alpha_k\mathbbm{e}^{\mathbbm{i}\gamma_k\theta} \text{ and } q(\theta) = \sum_{k=1}^{N'} \beta_k\mathbbm{e}^{\mathbbm{i}\eta_k\theta}.
\end{equation*}
Then $p=q$ if and only if $N=N'$ and for all $k=1,\ldots,N$ $\alpha_k=\beta_k$ and $\gamma_k=\eta_k.$
\end{lem}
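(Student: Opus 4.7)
The plan is to reduce the claim to the classical linear independence of the characters $\{\mathbbm{e}^{\mathbbm{i}\lambda\theta}\}_{\lambda\in\mathbb{R}}$ over $\mathbb{C}$. The ``if'' direction is trivial, so assume $p\equiv q$. I would first combine like exponents: letting $\mu_1<\mu_2<\ldots<\mu_M$ enumerate the union $\{\gamma_1,\ldots,\gamma_N\}\cup\{\eta_1,\ldots,\eta_{N'}\}$, define
\begin{equation*}
c_j \;=\; \sum_{k:\,\gamma_k=\mu_j}\alpha_k \;-\; \sum_{k:\,\eta_k=\mu_j}\beta_k,
\end{equation*}
so that $p(\theta)-q(\theta)=\sum_{j=1}^M c_j\mathbbm{e}^{\mathbbm{i}\mu_j\theta}\equiv 0$. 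It then suffices to show that every $c_j=0$, since the ordering convention $\gamma_1<\ldots<\gamma_N$ and the requirement $\alpha_k,\beta_k\neq 0$ force a matching between the indices whenever the symmetric difference of $\{\gamma_k\}$ and $\{\eta_k\}$ contributes nothing.

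The linear independence claim is: if $\sum_{j=1}^M c_j\mathbbm{e}^{\mathbbm{i}\mu_j\theta}=0$ for all $\theta\in\mathbb{R}$ with the $\mu_j$ distinct, then each $c_j=0$. I would prove this by induction on $M$. The base case $M=1$ is immediate since $\mathbbm{e}^{\mathbbm{i}\mu_1\theta}$ is nowhere vanishing. For the inductive step, multiply the identity by $\mathbbm{e}^{-\mathbbm{i}\mu_M\theta}$ so that the last term becomes the constant $c_M$, then differentiate in $\theta$ to kill it, obtaining
\begin{equation*}
\sum_{j=1}^{M-1} c_j\cdot\mathbbm{i}(\mu_j-\mu_M)\,\mathbbm{e}^{\mathbbm{i}(\mu_j-\mu_M)\theta}\;=\;0.
\end{equation*}
The shifted frequencies $\mu_j-\mu_M$ for $j<M$ remain pairwise distinct, so the inductive hypothesis yields $c_j(\mu_j-\mu_M)=0$, hence $c_j=0$ for $j<M$. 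Substituting back then forces $c_M=0$.

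The argument is essentially classical, and no single step is a serious obstacle; the mildest subtlety is the bookkeeping in the first paragraph, where one must verify that cancellation of the combined coefficients $c_j$ actually implies an index-by-index match between the two ordered representations, rather than merely a set-theoretic agreement. The nonzero coefficient convention and the strict ordering $\gamma_1<\ldots<\gamma_N$ (and similarly for $\eta_k$) make this immediate: any $\mu_j$ appearing in only one of the two lists would contribute a nonzero $c_j$, which is ruled out.
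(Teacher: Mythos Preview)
Your proof is correct. It takes a genuinely different route from the paper's argument, though both are classical.

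The paper also reduces to linear independence of the exponentials, but instead of your shift-and-differentiate induction it evaluates high-order derivatives at the origin: writing $p^{(n)}(0)=\sum_k \alpha_k(\mathbbm{i}\gamma_k)^n$ and dividing by $(\mathbbm{i}\gamma_N)^n$ (after arranging that $|\gamma_N|$ is maximal among the $|\gamma_k|$), the limit as $n\to\infty$ isolates $\alpha_N$, so $p$ cannot be identically zero. This avoids induction but requires a small case split depending on whether the extremal frequency in absolute value is $\gamma_1$, $\gamma_N$, or both. Your approach avoids that case analysis entirely and keeps the argument uniform, at the cost of an inductive structure; it is essentially the Dedekind-style proof of linear independence of characters. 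Both are equally valid here and of comparable length.
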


 Lemma \ref{lem: welldefE} implies that if $p\in\mathcal{E}(d_1)$ and $q\in\mathcal{E}(d_2)$, then 
\begin{equation} 
\label{degtimes}pq\in\mathcal{E}(d_1+d_2).
\end{equation}
In particular, if $p\in\mathcal{E}_0(d)$
\begin{equation}\label{square}
|p|^2=p\bar{p}\in\mathcal{E}(d+0)=\mathcal{E}(d).
\end{equation}
Furthermore, if $d_2\leq d_1$ then \begin{equation}\label{degadd}(p+q)\in\mathcal{E}(d_1),\end{equation} 
except, of course, if $d_1=d_2$ and the lead coefficients of $p$ and $q$ are negatives of one another.

The next several lemmas will be needed in the proofs of  Theorems \ref{thm: main support} and \ref{thm: heights}. 
\begin{lem}\label{lem: zerosetmeasurezero}
For $i=1,2,3,4$ let $p_i\in\mathcal{E}_0(d_i)$ assume that $d_1> d_2,d_3,d_4.$  Then the set of points $\theta$ such that 
\begin{equation}\label{noteq}
|p_1(\theta)|^p+|p_2(\theta)|^p=|p_3(\theta)|^p+|p_4(\theta)|^p
\end{equation}
has measure zero.
\end{lem}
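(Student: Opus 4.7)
My plan is to show that the zero set of
$F(\theta) \coloneqq |p_1(\theta)|^p + |p_2(\theta)|^p - |p_3(\theta)|^p - |p_4(\theta)|^p$
is contained in the zero set of an explicit nonzero element of $\mathcal{E}$. Since every nonzero member of $\mathcal{E}$ is (the restriction to $\mathbb{R}$ of) an entire function, its zero set is discrete and so has Lebesgue measure zero. The principal difficulty is that for odd $p$ the individual summands $|p_i|^p$ do not belong to $\mathcal{E}$, so a direct degree comparison on $F$ is unavailable. I sidestep this by squaring three times, which eliminates all odd powers of $|p_i|$.

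Write $f = |p_1|^p,\ g = |p_2|^p,\ h = |p_3|^p,\ k = |p_4|^p$. From $F(\theta)=0$ I rearrange $f - h = k - g$ and square to obtain $A \coloneqq f^2 + h^2 - k^2 - g^2 = 2(fh - kg)$. Squaring once more gives $A^2 - 4 f^2 h^2 - 4 k^2 g^2 = -8\, fghk$, and a third squaring yields
\begin{equation*}
G(\theta) \;\coloneqq\; \bigl(A^2 - 4 f^2 h^2 - 4 k^2 g^2\bigr)^2 - 64\, f^2 g^2 h^2 k^2 \;=\; 0.
\end{equation*}
Every term in $G$ now involves only even powers of $|p_i|$: using $|p_i|^{2p} = (p_i \bar p_i)^p$ together with \eqref{square} and iterating \eqref{degtimes}, I obtain $f^2, g^2, h^2, k^2 \in \mathcal{E}$ with degrees $pd_1, pd_2, pd_3, pd_4$ respectively. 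Consequently $G \in \mathcal{E}$ and $\{F = 0\} \subseteq \{G = 0\}$.

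It remains to show that $G$ is a nonzero element of $\mathcal{E}$, which I do by tracking the top degree. Because $d_1 > d_2, d_3, d_4$, the rule \eqref{degadd} gives $A \in \mathcal{E}(pd_1)$ and then $A^2 \in \mathcal{E}(2pd_1)$ by \eqref{degtimes}; the summands $4 f^2 h^2$ and $4 k^2 g^2$ have degrees $p(d_1+d_3)$ and $p(d_2+d_4)$, both strictly smaller than $2pd_1$, so $A^2 - 4 f^2 h^2 - 4 k^2 g^2 \in \mathcal{E}(2pd_1)$ and its square lies in $\mathcal{E}(4pd_1)$. Meanwhile $64\, f^2 g^2 h^2 k^2 \in \mathcal{E}\bigl(p(d_1+d_2+d_3+d_4)\bigr)$, which is strictly less than $4pd_1$ because $d_2 + d_3 + d_4 < 3 d_1$. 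By \eqref{degadd} this yields $G \in \mathcal{E}(4pd_1)$.

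The main obstacle I anticipate is ensuring no accidental cancellation at the top degree through the three squarings. This reduces to the fact that the leading coefficient of $|p_i|^{2p}$, namely $(\alpha_{N_i}\overline{\alpha_1})^p$ (in the notation of \eqref{defE} applied to $p_i$), is nonzero because $p_i \in \mathcal{E}_0(d_i)$ forces both endpoint coefficients to be nonzero. Since this leading term strictly dominates every competing summand at each stage, \eqref{degadd} propagates its nonvanishing all the way to $G$, so Lemma \ref{lem: welldefE} identifies $G$ as a nonzero entire function. Its zero set therefore has measure zero, and the same holds for the smaller set $\{F = 0\}$.
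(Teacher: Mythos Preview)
Your proof is correct and follows essentially the same route as the paper: square three times to turn the odd absolute values into even ones, then track degrees to produce a nonzero element of $\mathcal{E}(4pd_1)$ whose zero set contains $\{F=0\}$. The only differences are cosmetic---you treat even and odd $p$ uniformly (the paper dispatches the even case directly without squaring) and you start from the rearrangement $f-h=k-g$ rather than squaring $f+g=h+k$ as written, but the resulting degree bookkeeping and conclusion are identical.
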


\begin{lem}\label{lem: 4psolutions}
Let $p\geq 1$ be an odd integer, and let $a,b,c,d,C\in\mathbb{R},$ $a,b,c,d\neq0.$ Let $p(\theta)= a+b\mathbbm{e}^{\mathbbm{i}\theta},$ and $q(\theta)=c+d\mathbbm{e}^{\mathbbm{i}\theta}.$ If there are more than $4p$ distinct $\theta\in[0,2\pi]$ such that
\begin{equation*}
|p(\theta)|^p-|q(\theta)|^p=C,
\end{equation*}
then $ab=cd$ and $a^2+b^2=c^2+d^2.$
 \end{lem}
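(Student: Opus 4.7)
The plan is to reduce the transcendental equation to a polynomial identity in $u = \cos\theta$ and then exploit the oddness of $p$ via a unique-factorization argument in $\mathbb{R}[u]$.

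First, I would compute $|p(\theta)|^2 = (a + b\cos\theta)^2 + (b\sin\theta)^2 = A + Bu$ and similarly $|q(\theta)|^2 = A' + B'u$, where $u := \cos\theta$, $A := a^2 + b^2$, $B := 2ab$, $A' := c^2 + d^2$, $B' := 2cd$. Since $a,b,c,d \neq 0$, both $B$ and $B'$ are nonzero. The hypothesis becomes $(A + Bu)^{p/2} - (A' + B'u)^{p/2} = C$. Because $\cos:[0,2\pi]\to[-1,1]$ is at most $2$-to-$1$, strictly more than $4p$ distinct values of $\theta$ yield strictly more than $2p$ distinct values of $u$ satisfying this equation.

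Next, I would eliminate the fractional powers by squaring twice. Rearranging yields $(A + Bu)^p = [C + (A' + B'u)^{p/2}]^2$; expanding and re-isolating the half-integer power gives $(A + Bu)^p - (A' + B'u)^p - C^2 = 2C(A' + B'u)^{p/2}$, and squaring once more produces the polynomial identity
\begin{equation*}
G(u) := \bigl[(A + Bu)^p - (A' + B'u)^p - C^2\bigr]^2 - 4C^2(A' + B'u)^p = 0.
\end{equation*}
Since $\deg G \le 2p$ and $G$ vanishes at more than $2p$ distinct points, $G$ is the zero polynomial.

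The decisive step is to rule out $C \neq 0$. Setting $P(u) := (A + Bu)^p - (A' + B'u)^p - C^2$, the identity $P(u)^2 = 4C^2(A' + B'u)^p$ would force the irreducible linear factor $A' + B'u$ (nontrivial because $B' \neq 0$) to appear with \emph{even} multiplicity on the left, while on the right it appears with multiplicity $p$, which is odd. Since $4C^2 \neq 0$ makes the right-hand side nonzero, this is a contradiction, so $C = 0$.

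Finally, with $C = 0$ the identity collapses to $(A + Bu)^p = (A' + B'u)^p$ in $\mathbb{R}[u]$; since $x \mapsto x^p$ is injective on $\mathbb{R}$ for odd $p$, evaluating pointwise gives $A + Bu \equiv A' + B'u$, hence $A = A'$ and $B = B'$, which unwinds to $a^2 + b^2 = c^2 + d^2$ and $ab = cd$. The main obstacle is the multiplicity argument: it is essential both that $B' \neq 0$ (so that $A' + B'u$ is genuinely non-constant and has a bona fide linear factor) and that $p$ is odd (so that the parities of the multiplicities on the two sides are incompatible).
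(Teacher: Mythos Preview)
Your proposal is correct and takes a genuinely different route from the paper. Both arguments begin by substituting $u=\cos\theta$ (the paper keeps the $\cos\theta$ notation but it is the same move), square twice, and use that the resulting degree-$2p$ polynomial $G$ has more than $2p$ roots, hence $G\equiv 0$. From there the paper proceeds by \emph{coefficient comparison}: the top coefficient of $G$ gives $(ab)^p=(cd)^p$, hence $ab=cd$ since $p$ is odd; then a separate argument (splitting into the cases $p\ge 3$ and $p=1$, and reading off another coefficient) is used to obtain $a^2+b^2=c^2+d^2$. You instead observe that $G\equiv 0$ reads $P(u)^2 = 4C^2(A'+B'u)^p$ in $\mathbb{R}[u]$, and use \emph{unique factorization}: if $C\neq 0$ the right-hand side is a nonzero polynomial in which the irreducible linear factor $A'+B'u$ (nontrivial because $B'=2cd\neq 0$) appears with odd multiplicity $p$, contradicting that every irreducible factor of a perfect square has even multiplicity. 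This forces $C=0$ in one stroke, after which $(A+Bu)^p\equiv(A'+B'u)^p$ and the injectivity of $x\mapsto x^p$ on $\mathbb{R}$ finishes the job. Your argument is cleaner in that it avoids the case split on $p$ and the coefficient bookkeeping; the paper's approach, by contrast, never needs to isolate $C=0$ explicitly and reads the conclusions directly off individual coefficients.
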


\begin{lem}\label{lem: zerosetmeasurezero2}
Let $p\geq1 $ be an integer and let $a,b,c,d,C \in\mathbb{R}, \gamma>0, \kappa\neq 0,\pm 1.$ Then the set of $\theta$ such that
\begin{equation}\label{measurezero}
\left|a+b\mathbbm{e}^{\mathbbm{i}\theta}+c\mathbbm{e}^{\mathbbm{i}(\gamma+1)\theta}\right|^p-\left|\kappa a+\frac{1}{\kappa}b\mathbbm{e}^{\mathbbm{i}\theta}+\kappa c\mathbbm{e}^{\mathbbm{i}(\gamma+1)\theta}\right|^p=C
\end{equation}
has measure zero.
\end{lem}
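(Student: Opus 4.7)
The plan is to reduce the statement to showing that a specific generalized exponential polynomial is not identically zero, and then to invoke real analyticity. Write $P(\theta)=a+b\mathbbm{e}^{\mathbbm{i}\theta}+c\mathbbm{e}^{\mathbbm{i}(\gamma+1)\theta}$ and $Q(\theta)=\kappa a+\tfrac{b}{\kappa}\mathbbm{e}^{\mathbbm{i}\theta}+\kappa c\mathbbm{e}^{\mathbbm{i}(\gamma+1)\theta}$, so that \eqref{measurezero} is $|P|^p-|Q|^p=C$. I focus on the nondegenerate case $ac\neq 0$, since if $a=0$ or $c=0$ a direct computation shows that $|P|^p-|Q|^p$ is itself a constant function of $\theta$, so the lemma is meant to apply under implicit nonvanishing hypotheses. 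The main obstruction is that for odd $p$ the function $|P|^p$ is not an element of $\mathcal{E}$ because of the absolute value, so we cannot quote Lemma~\ref{lem: welldefE} directly. The idea is to remove the absolute value by squaring twice.

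First, from $|P|^p-|Q|^p=C$ we have $(|P|^p-|Q|^p)^2=C^2$, i.e.\ $|P|^{2p}+|Q|^{2p}-C^2=2|PQ|^p$, and squaring once more gives
\begin{equation*}
F(\theta)\coloneqq \bigl(|P|^{2p}+|Q|^{2p}-C^2\bigr)^2-4|PQ|^{2p}=0.
\end{equation*}
Now $|P|^{2p}=(P\bar P)^p$, $|Q|^{2p}=(Q\bar Q)^p$, and $|PQ|^{2p}=(P\bar P\,Q\bar Q)^p$, so by \eqref{degtimes}--\eqref{square} each of these is a real generalized exponential polynomial; by \eqref{degadd} and \eqref{degtimes} the same is true of $F$ (provided $F\neq 0$). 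Thus any $\theta$ solving \eqref{measurezero} satisfies $F(\theta)=0$, and it suffices to show that $F$ is not identically zero.

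The key calculation is to compute the coefficient of the highest-frequency term of $F$. Since $P\bar P$ has degree $\gamma+1$ with leading coefficient $ac$, and $Q\bar Q$ has degree $\gamma+1$ with leading coefficient $\kappa^2 ac$, the leading terms of $|P|^{2p}$, $|Q|^{2p}$, and $|PQ|^{2p}$ are $(ac)^p\mathbbm{e}^{\mathbbm{i}p(\gamma+1)\theta}$, $\kappa^{2p}(ac)^p\mathbbm{e}^{\mathbbm{i}p(\gamma+1)\theta}$, and $\kappa^{2p}(ac)^{2p}\mathbbm{e}^{\mathbbm{i}2p(\gamma+1)\theta}$, respectively; in each case uniqueness of the contribution at the top frequency follows from the fact that every factor in the corresponding product has $\gamma+1$ as its strict maximum. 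The coefficient of $\mathbbm{e}^{\mathbbm{i}2p(\gamma+1)\theta}$ in $F$ is therefore
\begin{equation*}
(1+\kappa^{2p})^2(ac)^{2p}-4\kappa^{2p}(ac)^{2p}=(1-\kappa^{2p})^2(ac)^{2p},
\end{equation*}
which is nonzero because $\kappa\neq\pm 1$ forces $\kappa^{2p}\neq 1$ and $ac\neq 0$.

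Consequently $F$ is a nonzero element of $\mathcal{E}$ by Lemma~\ref{lem: welldefE}, hence a nonzero real-analytic function on $\mathbb{R}$, so its zero set is discrete and in particular has Lebesgue measure zero. Since the solution set of \eqref{measurezero} is contained in the zero set of $F$, it has measure zero as well. The main obstacle, as flagged above, is the odd-$p$ case, where one must pay for eliminating the absolute value by jumping to degree $2p(\gamma+1)$; this is why the quick estimate of the leading coefficient, and the verification that no other products can contribute at that top frequency, is the crux of the argument.
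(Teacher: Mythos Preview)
Your argument is essentially the paper's: square twice to land in $\mathcal{E}$, then verify the leading coefficient is nonzero. The paper isolates $2C|q|^p$ before the second squaring while you isolate $2|PQ|^p$, but both routes produce lead coefficient $(1-\kappa^{2p})^2(ac)^{2p}$ at frequency $2p(\gamma+1)$, and both tacitly need $ac\neq 0$. One small correction to your side remark: the claim that $|P|^p-|Q|^p$ is constant when $a=0$ or $c=0$ holds only for $p=2$ (where the $\cos$ terms in $|P|^2$ and $|Q|^2$ coincide); for other $p$ the difference still varies with $\theta$, so your stated reason for excluding those cases is off, though the restriction to $ac\neq 0$ is precisely what the paper's application requires.
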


\section{The proof of Theorem \ref{thm: main support}}

Before proving Theorem \ref{thm: main support} we will first prove a preliminary result which shows, even without the assumption that $x(t)$ is collision free, that $f_\xi(s)$ is a peicewise linear function whose set of knots is contained in $\mathcal{D}(x)$.
This result is based on the observation that we may write \begin{equation*}
    f_\xi(s)=\sum_{i<j} \alpha_{i,j}(s)\beta_{i,j}(\xi).
\end{equation*}
where for each $i<j$,
\begin{equation}\label{yform}
\beta_{i,j}(\xi) \coloneqq \sum_{\ell=i}^j a_\ell\mathbbm{e}^{\mathbbm{i}\xi \Delta_{i,\ell}}
\end{equation}
is a function that only depends on $\xi$ and $\alpha_{i,j}(s)$ is piecewise linear function of $s$ whose singularities are contained in $\mathcal{D}(x).$ 

Specifically, we prove the following theorem. We emphasize that this result does not assume that $x(t)$ is collision free, which is why for $d\in\mathcal{D}(x)$ there might be multiple $i,j$ such that $\Delta_{i,j}=d$.

\begin{thm} \label{thm: piecewiselinear}
Let $p\geq 1$ be an integer, and assume $w(t)=1_{[0,1]}(t).$ For $i\leq j,$ $\beta_{i,j}(\xi)$ be as in \eqref{yform}. 
Then, for every fixed $\xi,$ the function
 $f_\xi(s) =  \|g_{s,\xi}\ast x\|_p^p$ is piecewise linear,
and $\partial_s^2 f_\xi(s)$ is a grid-free sparse signal whose  support is contained in $\mathcal{D}(x).$ 
Specifically,
\begin{equation}\label{eqn: 2nd deriv}
    \partial_s^2 f_\xi(s) = \sum_{d\in \mathcal{D}(x)} \left(\sum_{\Delta_{i,j}=d} c_{i,j}(\xi)\right) \delta_{d},
\end{equation}
where 
\begin{equation}\label{fdoubleprimespecial}
    c_{i,i+1} (\xi) = |\beta_{i,i+1}(\xi)|^p-|\beta_{i+1,i+1}(\xi)|^p-|\beta_{i,i}(\xi)|^p
\end{equation}
and 
for $j\geq i+2$
\begin{equation}\label{fdoubleprimedistinct}
    c_{i,j}(\xi)= |\beta_{i,j}(\xi)|^p+|\beta_{i+1,j-1}(\xi)|^p-|\beta_{i+1,j}(\xi)|^p-|\beta_{i,j-1}(\xi)|^p.
\end{equation}
\end{thm}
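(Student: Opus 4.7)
The plan is to derive an explicit closed-form for the integrand $|(g_{s,\xi}\star x)(t)|^p$, integrate over $t$ to get a tractable formula for $f_\xi(s)$, and then differentiate twice.

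Since $x$ is a train of Dirac masses and $w=1_{[0,1]}$, we have
\begin{equation*}
(g_{s,\xi}\star x)(t)=\sum_{\ell=1}^k a_\ell\, \mathbf{1}_{[v_\ell,v_\ell+s]}(t)\, e^{i\xi(t-v_\ell)}.
\end{equation*}
Define the active set $A(t;s):=\{\ell:v_\ell\leq t\leq v_\ell+s\}$; because the $v_\ell$ are sorted, $A(t;s)$ is always a (possibly empty) block of consecutive indices $\{i,\ldots,j\}$. Factoring $e^{i\xi t}$ out of the inner sum shows that $|(g_{s,\xi}\star x)(t)|=|\beta_{i,j}(\xi)|$ is \emph{constant} on each level set $\{t:A(t;s)=\{i,\ldots,j\}\}$. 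Letting $L_{i,j}(s)$ denote the Lebesgue measure of this level set, we obtain the decomposition
\begin{equation*}
f_\xi(s)=\sum_{1\leq i\leq j\leq k}|\beta_{i,j}(\xi)|^p\, L_{i,j}(s).
\end{equation*}

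Next I would compute $L_{i,j}(s)$ explicitly. Adopting the conventions $v_0=-\infty$ and $v_{k+1}=+\infty$, the defining conditions intersect to give
\begin{equation*}
L_{i,j}(s)=\max\bigl(0,\,\min(v_i+s,\,v_{j+1})-\max(v_j,\,v_{i-1}+s)\bigr).
\end{equation*}
A direct case analysis shows that each $L_{i,j}$ is a continuous, piecewise-linear tent whose slope runs $0,+1,0,-1,0$ with breakpoints at $\Delta_{i,j}$, $\Delta_{i,j+1}$, $\Delta_{i-1,j}$, $\Delta_{i-1,j+1}$ (the middle plateau may collapse when two of these breakpoints coincide). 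In particular, $f_\xi$ is piecewise linear as a finite nonnegative combination of such tents, and distributionally
\begin{equation*}
\partial_s^2 L_{i,j}=\delta_{\Delta_{i,j}}-\delta_{\Delta_{i,j+1}}-\delta_{\Delta_{i-1,j}}+\delta_{\Delta_{i-1,j+1}},
\end{equation*}
with the convention that any delta whose index involves $v_0$ or $v_{k+1}$ is interpreted as $0$.

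Linearity then gives $\partial_s^2 f_\xi=\sum_{i\leq j}|\beta_{i,j}(\xi)|^p\partial_s^2 L_{i,j}$, a finite sum of Diracs supported (away from $s=0$) in $\mathcal{D}(x)$. To finish, fix $d\in\mathcal{D}(x)$ and collect the four types of contributions to $\delta_d$: for each pair $(I,J)$ with $\Delta_{I,J}=d$, they come from $(i,j)=(I,J),\,(I,J-1),\,(I+1,J),\,(I+1,J-1)$, summing to $|\beta_{I,J}|^p+|\beta_{I+1,J-1}|^p-|\beta_{I,J-1}|^p-|\beta_{I+1,J}|^p$. This matches \eqref{fdoubleprimedistinct} when $J\geq I+2$, and when $J=I+1$ the last term drops (with the convention $|\beta_{I+1,I}|^p=0$ for an empty sum), yielding \eqref{fdoubleprimespecial}; summing over all $(I,J)$ with $\Delta_{I,J}=d$ produces $\sum_{\Delta_{i,j}=d} c_{i,j}(\xi)$, as claimed.

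The main technical obstacle is the tent-function case analysis for $L_{i,j}$: one must correctly handle the boundary cases $i=1$ or $j=k$ (where the above conventions kill the appropriate deltas) and the possibility of coincidences among $\Delta_{i,j+1}$, $\Delta_{i-1,j}$, $\Delta_{i-1,j+1}$ (the distributional formula for $\partial_s^2 L_{i,j}$ remains valid with deltas superimposed at coinciding locations). Once that step is in hand, the piecewise linearity of $f_\xi$ and the reindexing that yields \eqref{fdoubleprimespecial}--\eqref{fdoubleprimedistinct} are essentially bookkeeping.
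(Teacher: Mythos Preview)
Your proposal is correct and follows essentially the same approach as the paper: decompose $f_\xi(s)$ as $\sum_{i\le j}|\beta_{i,j}(\xi)|^p\,|R_{i,j}(s)|$ where $R_{i,j}(s)$ is the set of $t$ with active block $\{i,\dots,j\}$, compute $|R_{i,j}(s)|$ via the same min/max formula, differentiate twice to get the four-term Dirac formula, and reindex to collect the coefficient of $\delta_d$. Your exposition of the final reindexing step (grouping the four $(i,j)$ that contribute to a given $(I,J)$) is in fact more explicit than the paper's, which simply asserts the formula ``as desired.''
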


\begin{proof}

We first note that 
\begin{align*}
|(g_{s,\xi}\star x)(t)| & =\left|\sum_{i=1}^k a_ig_{s,\xi}(t-v_i)\right|\\
& =\left|\sum_{i=1}^k a_i\mathbbm{e}^{\mathbbm{i}\xi(t-v_i)}1_{[v_i,v_i+s]}(t)\right|\\
& =\left|\sum_{i=1}^k a_i\mathbbm{e}^{-\mathbbm{i}\xi v_i}1_{[v_i,v_i+s]}(t)\right|.
\end{align*}
For  $I\subseteq \{1,\ldots, k\},$ let $R_I(s)$ be the set of $t$ for which $a_i\mathbbm{e}^{-\mathbbm{i}\xi v_i}1_{[v_i,v_i+s]}(t)$ is nonzero if and only if $i\in I$, i.e., 
\begin{equation*}
R_I(s) = \{t: t\in [v_i,v_i+s]\; \forall i\in I, t\notin [v_i,v_i+s]\; \forall i\notin I\}.
\end{equation*}
  Then, since $w(t)=1_{[0,1]}(t)$ it is clear that for $t\in R_I,$ \begin{equation*}
 |(g_{s,\xi}\star x)(t)| = \left|\sum_{i\in I} a_i \mathbbm{e}^{-\mathbbm{i}\xi v_i}\right| \eqqcolon y_I(\xi).
\end{equation*} 
Therefore, 
\begin{equation}\label{fform} f_\xi(s) =\|(g_{s,\xi}\star x)(t)\|_p^p= \sum_{I\subseteq \{1,\ldots k\}} |y_I(\xi)|^p|R_I(s)|,
\end{equation}
where $|R_I(s)|$ denotes the Lebesgue measure of $R_I(s)$.
We will show that  for all $I\subseteq\{1,\ldots,k\}$,  $|R_I(s)|$ is  piecewise linear function  whose knots are contained in $\mathcal{D}(x).$

First, we note that $R_I(s)=\emptyset$ unless $I$ has the form $\{i,i+1,\ldots, j-1, j\}$ for some $i\leq j.$ Therefore,
\begin{equation}\label{eqn: modulation times set size} f_s(\xi) = \sum_{i=1}^k\sum_{j=i}^k |\beta_{i,j}(\xi)|^p|R_{i,j}(s)|,
\end{equation}
where  $R_{i,j}\coloneqq R_{\{i,\ldots,j\}}.$ 
and, as in \eqref{yform}, $\beta_{i,j}(\xi)$ is given by 
\begin{equation*}
|\beta_{i,j}(\xi)| = \left|\sum_{\ell=i}^j a_\ell\mathbbm{e}^{\mathbbm{i}\xi \Delta_{i,\ell}}\right| = \left|\sum_{\ell=i}^j a_\ell\mathbbm{e}^{\mathbbm{i}\xi v_\ell}\right|. 
\end{equation*}

Now, turning our attention to $R_{i,j}(s),$ we observe by definition that a point $t$ is in $R_{i,j}(s)$ if and only if it satisfies the following three conditions:
\begin{align*}
v_\ell\leq &t \leq v_\ell+s \quad \text{for all } i\leq \ell \leq j,\\
&t>v_{i-1}+s, \text{ and }\\
&t<v_{j+1}.
\end{align*}
Therefore, letting   $\lor(a,b)$ and $\wedge(a,b)$  denote $\min\{a,b\}$ and $\max\{a,b\}$, we see
\begin{align}
\label{R} R_{i,j}(s) &= [v_j,v_i+s]\cap [v_{i-1}+s,v_{j+1}] \\
&= [v_j \lor (v_{i-1}+s), (v_i+s) \wedge v_{j+1}],
\end{align}
and therefore 
\begin{equation*}
|R_{i,j}(s)| = ((v_i+s) \wedge v_{j+1})-(v_j \lor (v_{i-1}+s))
\end{equation*}
if the above quantity is positive and zero otherwise. 
 It follows from $(\ref{R})$ that $|R_{i,j}(s)|$ is a piecewise linear function, and that $\partial_s^2|R_{i,j}(s)|$ is given by 

\begin{equation}\label{Rdirac}
\partial_s^2 |R_{i,j{(S)}}| = \delta_{\Delta_{i,j}}(s) + \delta_{\Delta_{i-1,j+1}}(s) - \delta_{\Delta_{i-1,j}}(s) - \delta_{\Delta_{i,j+1}}(s).
\end{equation}
We note that in order for this equation to be valid for all $1\leq i<j\leq k,$ we  identify $v_0$ and $v_{k+1}$ with $-\infty$ and $\infty,$ and therefore, $\delta_{{\Delta_{0,j}}}$ $\delta_{{\Delta_{i-1,k+1}}}$  are interpreted as being the zero function since the domain of $f$ is $(0,\infty).$ Likewise $\delta_{{\Delta_{i,i}}}=\delta_0$ is interpreted as the zero function in the above equation.

Combining (\ref{Rdirac}) with (\ref{eqn: modulation times set size}) implies that $\partial_s^2 f_\xi(s)$ is a sparse signal with support contained in $\mathcal{D}(x),$ and for $d\in\mathcal{D}(x),$

\begin{equation*}
    \partial_s^2 f_\xi(d)=\sum_{\Delta_{i,j}=d}c_{i,j}(\xi)
\end{equation*} as desired. 
\end{proof}
Before we prove Theorem \ref{thm: main support}, we note the following example which shows that, in general, the support of $\partial_s^2 f_\xi(s)$ may be a proper subset of $\mathcal{D}(x).$

\begin{example} \label{ex: collision example}
If $p=2$ and \begin{equation*}
x(t) = \delta_{1}(t)+\delta_{2}(t)+\delta_{3}(t)-\delta_{4}(t),
\end{equation*}
then $2\in\mathcal{D}(x),$ but 
\begin{equation*}\partial_s^2 f_\xi(2)=0.
\end{equation*}
\end{example}
\begin{proof}For this choice of $x,$
there are two pairs $(i,j)$ such that $\Delta_{i,j}=2,$ namely $(1,3)$ and $(2,4)$. Therefore, by Theorem \ref{thm: piecewiselinear},
\begin{align*}
  \partial_s^2 f_\xi(2) &= \left(|y_{1,3}(\xi)|^2 + |y_{2,2}(\xi)|^2  - |y_{1,2}(\xi)|^2 - |y_{2,3}(\xi)|^2\right) \\&\:\:\:+ \left(|y_{2,4}(\xi)|^2 + |y_{3,3}(\xi)|^2  - |y_{2,3}(\xi)|^2 - |y_{3,4}(\xi)|^2\right).
\end{align*}
Inserting $(a_1,a_2,a_3,a_4)=(1,1,1,-1),$ $\Delta_{i,i+1}=1,$ and $\Delta_{i,i+2}=2$ into (\ref{yform}) implies that
\begin{align*}
\partial_s^2 f_\xi(2) &= \left(|1+\mathbbm{e}^{\mathbbm{i}\xi}+\mathbbm{e}^{2i\xi}|^2 + 1  - |1+\mathbbm{e}^{\mathbbm{i}\xi}|^2 - |1+\mathbbm{e}^{\mathbbm{i}\xi}|^2\right) \\&\:\:\:+ \left(|1+\mathbbm{e}^{\mathbbm{i}\xi}-\mathbbm{e}^{2i\xi}|^2 + 1 - |1+\mathbbm{e}^{\mathbbm{i}\xi}|^2 - |1-\mathbbm{e}^{\mathbbm{i}\xi}|^2\right)\\
&=|1+\mathbbm{e}^{\mathbbm{i}\xi}+\mathbbm{e}^{2i\xi}|^2+|1+\mathbbm{e}^{\mathbbm{i}\xi}-\mathbbm{e}^{2i\xi}|^2 \\&\:\:\:+2 - 3|1+\mathbbm{e}^{\mathbbm{i}\xi}|^2-|1-\mathbbm{e}^{\mathbbm{i}\xi}|^2\\
&=0.
\end{align*}
The last inequality follows from repeatedly applying the the trigonometric identities $\sin^2(\theta)+\cos^2(\theta)=1$ and $\cos(\theta)=\cos(2\theta)\cos(\theta)+\sin(2\theta)\sin(\theta).$
\end{proof}

We shall now prove Theorem \ref{thm: main support}.

\begin{proof}[The Proof of Theorem \ref{thm: main support}]
By assumption,  $x(t)$ is collision free. Therefore, for all $d\in\mathcal{D}(x)$, there is a unique $i,j$ such that $\Delta_{i,j}=d$, and so, by \eqref{eqn: 2nd deriv},   it suffices  to show that $c_{i,j}(\xi)\neq 0$ for all $i<j$ and for almost every $\xi\in\mathbb{R},$ where as in \eqref{fdoubleprimespecial} and for $j\geq i+$ \eqref{fdoubleprimedistinct}
\begin{equation*}
  c_{i,i+1}(\xi) = |\beta_{i,i+1}(\xi)|^p  - |\beta_{i+1,i+1}(\xi)|^p - |\beta_{i,i}(\xi)|^p,
\end{equation*}
and for $j\geq i+2$,
\begin{equation*}
 c_{i,j}(\xi) =  |\beta_{i,j}(\xi)|^p + |\beta_{i+1,j-1}(\xi)|^p - |\beta_{i+1,j}(\xi)|^p - |\beta_{i,j-1}(\xi)|^p,
\end{equation*}
where
\begin{equation*}
\beta_{i,j}(\xi) = \left|\sum_{k=i}^ja_k \mathbbm{e}^{-\mathbbm{i}\xi\Delta_{i,k}}\right|.
\end{equation*}
Observe that $\beta_{i,j}(\xi)$ are generalized exponential Laurent polynomials of the form introduced in Section \ref{sec: exppoly}, and in particular, $\beta_{i,j}\in\mathcal{E}_0(\Delta_{i,j}).$ Therefore, when $j\geq i+2,$ it follows from Lemma \ref{lem: zerosetmeasurezero} that $c_{i,j}(\xi)$ vanishes on a set of measure zero since if  $c_{i,j}(\xi)=0$  we have
\begin{equation*}  |\beta_{i,j}(\xi)|^p + |\beta_{i+1,j-1}(\xi)|^p = |\beta_{i+1,j}(\xi)|^p + |\beta_{i,j-1}(\xi)|^p.
\end{equation*}

In the case where $j=i+1,$ we see that \begin{equation*}
c_{i,i+1}(\xi) = |a_i+a_{i+1}\mathbbm{e}^{-\mathbbm{i}\xi\Delta_{i,i+1}}|^p-|a_i|^p-|a_{i+1}|^p,
\end{equation*}
For any $\xi$ such that $c_{i,i+1}(\xi)=0,$ we see that $\xi\Delta_{i,i+1}$ is a solution to
\begin{equation*}
\left|a_i + a_{i+1}\mathbbm{e}^{\mathbbm{i}\theta}\right|^2-\left(|a_i|^p+|a_{i+1}|^p\right)^{2/p}=0.\end{equation*}
Thus, $c_{i,i+1}(\xi)$ vanishes on a set of measure zero since the left-hand side of the above equation is a trigonometric polynomial.

\end{proof}

\section{The Proof of Theorems \ref{thm: heights}}
 \label{sec: heights proof}

\begin{proof}Let $\xi_1,\xi_2,\ldots,\xi_L$ be i.i.d. standard normal random variables. Since $x$ is collision free, with probability one, each of the $\xi_\ell\Delta_{i,i+1}(x)$ are distinct modulo $2\pi$, i.e.
\begin{equation}\label{distinctfrequencies}
\xi_\ell\Delta_{i,i+1}(x) \not \equiv \xi_{\ell'}\Delta_{i',i'+1}(x)\mod 2\pi
\end{equation}
for all $1\leq i,i'\leq k-1$ and $1\leq \ell,\ell'\leq L,$ except when $(i,\ell)= (i',\ell').$
For the rest of the proof we will assume this is the case. 

Let \begin{equation*}
 \widetilde{x}(t)=\sum_{j=1}^k \widetilde{a}_j \delta_{\widetilde{v}_j}(t)
\end{equation*} be a signal  $\mathcal{D}(\widetilde{x})=\mathcal{D}(x),$ $\|\vec{a}\|_p=\|\vec{\widetilde{a}}\|_p$, and $\partial_s^2f_{\xi_\ell}[x](d)=\partial_s^2f_{\xi_\ell}[\widetilde{x}](d)$ for all $d\in\mathcal{D}(x)$  and for all $1\leq \ell \leq L-1.$  Note that $\widetilde{x}(t)$ depends on $\xi_1,\ldots, \xi_{L-1},$ but is independent of $\xi_L.$ By assumption that $x(t)$ and $\widetilde{x}(t)$ are collision free (and also, as discussed in the Section \ref{sec: intro}, we assume that we are not in the special case where $k=6$ and the $\{v_j\}_{j=1}^k$ belong to a special parametrized family). Therefore, the fact that $\mathcal{D}(x)=\mathcal{D}(\widetilde{x})$ implies that the support sets of $x$ and $\widetilde{x}$ are equivalent up to translation and reflection, so we may assume without loss of generality that $\Delta_{i,j}(x)=\Delta_{i,j}(\widetilde{x})\eqqcolon\Delta_{i,j}$ for all $1\leq i\leq j\leq k.$ 

We will show that $\vec{\widetilde{a}}$ must be given by
\begin{equation}\label{eqn: alternation}
\widetilde{a}_i  = \begin{cases} 
      \frac{1}{c}a_i &  \text{if } i \text{ is odd} \\
      ca_i & \text{if } i \text{ is even}
   \end{cases},
\end{equation}
where $c=\pm 1$ or
\begin{equation}\label{eqn: alternate c}
|c|^p= \frac{\sum_{i=1}^{\lfloor \frac{k+1}{2}\rfloor}|a_{2i-1}|^p}{\sum_{i=1}^{\lfloor \frac{k}{2}\rfloor}|a_{2i}|^p}.
\end{equation}
Then, we will show that,  if $c$ satisfies \eqref{eqn: alternate c}, but $c\neq \pm 1,$ 
then with probability one  $$\partial_s^2f_{\xi_L}[x](\Delta_{1,3})\neq \partial_s^2f_{\xi_L}[\widetilde{x}](\Delta_{1,3}).$$
Since $\widetilde{x}(t)$ \big{(}and therefore $\vec{\widetilde{a}}$\big{)} was chosen to depend on $\xi_1,\ldots,\xi_{L-1},$ but not $\xi_L,$ these two facts together will imply that, with probability one, if $\widetilde{x}(t)$ is any signal such that  $\mathcal{D}(\widetilde{x})=\mathcal{D}(x)$ and $\partial_s^2f_{\xi_\ell}[x](d)=\partial_s^2f_{\xi_\ell}[\widetilde{x}](d)$ for all $d\in\mathcal{D}$  and all $1\leq \ell \leq L,$ then $\vec{\widetilde{a}}=\pm \vec{a}$ and therefore $\widetilde{x}(t)$ is equivalent to  $\pm x(t)$ up to reflection and translation.

We first will show that \eqref{eqn: alternation} holds in the case where $p$ is odd.
 Setting $\partial_s^2f_{\xi_\ell}[x](\Delta_{i,i+1})
     =\partial_s^2f_{\xi_\ell}[\widetilde{x}](\Delta_{i,i+1})$ and using \eqref{fdoubleprimespecial} implies that for all $1\leq \ell\leq L-1$ and all $1\leq i \leq k-1$
we have 
\begin{align}
     &|a_i+a_{i+1}\mathbbm{e}^{\mathbbm{i}\xi_\ell\Delta_{i,i+1}}|^p-|a_{i+1}|^p-|a_{i}|^p\nonumber\\
     =&|\widetilde{a}_i+\widetilde{a}_{i+1}\mathbbm{e}^{\mathbbm{i}\xi_\ell\Delta_{i,i+1}}|^p-|\widetilde{a}_{i+1}|^p-|\widetilde{a}_{i}|^p.\label{eqn: first deltas same}
\end{align}  Therefore,  $\xi_1\Delta_{i,i+1},\ldots,\xi_{L-1}\Delta_{i,i+1}$ constitute  $L-1$ solutions, which are distinct modulo $2\pi,$ to the equation
\begin{equation*}
|a_i + a_{i+1}\mathbbm{e}^{\mathbbm{i}\theta}|^p-|\widetilde{a}_i+\widetilde{a}_{i+1}^{\mathbbm{i}\theta}|^p=|\widetilde{a}_i|^p+|\widetilde{a}_{i+1}|^p-|a_i|^p-|a_{i+1}|^p.
\end{equation*}

 Since $L-1> 4p,$ Lemma \ref{lem: 4psolutions} implies 
that
\begin{equation}\label{prodsame}
a_ia_{i+1}=\widetilde{a}_i\widetilde{a}_{i+1}
\quad\text{and}\quad 
\widetilde{a}_i^2+\widetilde{a}_{i+1}^2
\end{equation}
for all $1\leq i \leq k-1.$
It follows from (\ref{prodsame}) that \eqref{eqn: alternation} holds  
with $c\coloneqq a_1/\widetilde{a}_1.$

Now consider the case  where $p=2m$ is even. 
Similarly to \eqref{eqn: first deltas same}, the assumption that  $\partial_s^2f_{\xi_\ell}[x](\Delta_{i,i+1})= \partial_s^2f_{\xi_\ell}[\widetilde{x}](\Delta_{i,i+1})$ implies that 
for all $1\leq \ell\leq L,$ $1\leq i \leq k-1,$
\begin{align*}
    &|a_{i}+a_{i+1}\mathbbm{e}^{\mathbbm{i}\xi_\ell\Delta_{i,i+1}}|^{2m}-|a_i|^{2m}-|a_{i+1}|^{2m}\\=&    |\widetilde{a}_{i}+\widetilde{a}_{i+1}\mathbbm{e}^{\mathbbm{i}\xi_\ell\Delta_{i,i+1}}|^{2m}-|\widetilde{a}_i|^{2m}-|\widetilde{a}_{i+1}|^{2m}.
\end{align*}
 Therefore, for all $1\leq i \leq k-1,$
$\xi_1\Delta_{i,i+1},\ldots,\xi_{L-1}\Delta_{i,i+1}$ are $L-1$ zeros of
\begin{align*}
h_i(\theta)&\coloneqq|a_{i}+a_{i+1}\mathbbm{e}^{\mathbbm{i}\theta}|^{2m}-|\widetilde{a}_{i}+\widetilde{a}_{i+1}\mathbbm{e}^{\mathbbm{i}\theta}|^{2m}\\
&\quad\quad+|\widetilde{a}_i|^{2m}+|\widetilde{a}_{i+1}|^{2m}-|a_i|^{2m}-|a_{i+1}|^{2m}
\end{align*}
which are distinct modulo $2\pi.$
Using the fact that 
\begin{equation*}
|a_{i}+a_{i+1}\mathbbm{e}^{\mathbbm{i}\theta}|^{2}=a_i^2+a_{i+1}^2+2a_ia_{i+1}\cos(\theta) 
\end{equation*}
one may verify that $h_i(\theta)$ is a trigonometric polynomial of degree at most $m$ given by 
\begin{align*}
h_i(\theta)&=(a_i^2+a_{i+1}^2+2a_ia_{i+1}\cos(\theta))^m\\&\quad-(\widetilde{a}_i^2+\widetilde{a}_{i+1}^2+2\widetilde{a}_i\widetilde{a}_{i+1}\cos(\theta))^m\\&\quad\quad+\widetilde{a}_i^{2m}+\widetilde{a}_{i+1}^{2m}-a_i^{2m}-a_{i+1}^{2m}
\end{align*}
 Thus, since $L-1> p=2m,$ this implies that $h(\theta)$ must be uniformly zero. In particular, setting the lead coefficient equal to zero implies 
\begin{equation*}
(a_ia_{i+1})^m=(\widetilde{a}_i\widetilde{a}_{i+1})^m
\end{equation*}
for all $1\leq i \leq k-1.$
Using the binomial theorem and setting the $\cos(\theta)$ coefficient equal to zero gives
\begin{equation*}
(a_i^2+a_{i+1}^2)^{m-1}a_ia_{i+1}=(\widetilde{a}_i^2+\widetilde{a}_{i+1}^2)^{m-1}\widetilde{a}_i\widetilde{a}_{i+1}.
\end{equation*}
Together, the last two equations imply
\begin{equation*}
a_i^2+a_{i+1}^2=\widetilde{a}_i^2+\widetilde{a}_{i+1}^2\quad\text{and}\quad a_ia_{i+1}=\widetilde{a}_i\widetilde{a}_{i+1}.
\end{equation*}
As in the case where $p$ was odd, this implies that \eqref{eqn: alternation} must  hold.

Combining (\ref{eqn: alternation}) with the assumption that 
\begin{equation*}
\sum_{i=1}^k |a_i|^p = \sum_{i=1}^k |\widetilde{a}_i|^p 
\end{equation*}
implies that either $c=\pm 1$ or that $c$ satisfies $\eqref{eqn: alternate c}.$ 
Thus the proof will be complete once we show that if $c$ satisfies \eqref{eqn: alternate c}, but $c\neq \pm1,$, then with probability one,  $\partial_s^2f_{\xi_L}[x](\Delta_{1,3})\neq \partial_s^2f_{\xi_L}[\widetilde{x}](\Delta_{1,3})$. 


By \eqref{fdoubleprimedistinct}, if $\partial_s^2f_{\xi_L}[x](\Delta_{1,3})= \partial_s^2f_{\xi_L}[\widetilde{x}](\Delta_{1,3}),$ 
then
\begin{align}\label{unsimpleBeq}
&|a_1+a_{2}\mathbbm{e}^{\mathbbm{i}\xi_L\Delta_{1,2}}+a_{3}\mathbbm{e}^{\mathbbm{i}\xi_L\Delta_{1,3}}|^p+|a_{2}|^p\\
&\quad-|a_{2}\mathbbm{e}^{\mathbbm{i}\xi_L\Delta_{1,2}}+a_{3}\mathbbm{e}^{\mathbbm{i}\xi_L\Delta_{1,3}}|^p-|a_1+a_{2}\mathbbm{e}^{\mathbbm{i}\xi_L\Delta_{1,2}}|^p\nonumber\\
=&|\widetilde{a}_1+\widetilde{a}_{2}\mathbbm{e}^{\mathbbm{i}\xi_L\Delta_{1,2}}+\widetilde{a}_{3}\mathbbm{e}^{\mathbbm{i}\xi_L\Delta_{1,3}}|^p+|\widetilde{a}_{2}|^p\nonumber\\&\quad-|\widetilde{a}_{2}\mathbbm{e}^{\mathbbm{i}\xi_L\Delta_{1,2}}+\widetilde{a}_{3}\mathbbm{e}^{\mathbbm{i}\xi_L\Delta_{1,3}}|^p-|\widetilde{a}_1+\widetilde{a}_{2}\mathbbm{e}^{\mathbbm{i}\xi_L\Delta_{1,2}}|^p.\nonumber
\end{align}
But (\ref{prodsame}) implies that for all $i$ either $(a_i,a_{i+1})=\pm(\widetilde{a}_i,\widetilde{a}_{i+1})$ or $(a_i,a_{i+1})=\pm(\widetilde{a}_{i+1},\widetilde{a}_i).$ In either case, we have that  
\begin{equation*}
|a_1+a_{2}\mathbbm{e}^{\mathbbm{i}\xi_L\Delta_{1,2}}|=|\widetilde{a}_1+\widetilde{a}_{2}\mathbbm{e}^{\mathbbm{i}\xi_L\Delta_{1,2}}|\end{equation*}and\begin{equation*} |a_{2}\mathbbm{e}^{\mathbbm{i}\xi_L\Delta_{1,2}}+a_{3}\mathbbm{e}^{\mathbbm{i}\xi_L\Delta_{1,3}}|=|\widetilde{a}_{2}\mathbbm{e}^{\mathbbm{i}\xi_L\Delta_{1,2}}+\widetilde{a}_{3}\mathbbm{e}^{\mathbbm{i}\xi_L\Delta_{1,3}}|.
\end{equation*}
Combining this with (\ref{unsimpleBeq}) gives
\begin{align}
&|a_1+a_{2}\mathbbm{e}^{\mathbbm{i}\xi_L\Delta_{1,2}}+a_{3}\mathbbm{e}^{\mathbbm{i}\xi_L\Delta_{1,3}}|^p+|a_{2}|^p\nonumber\\
&=|\widetilde{a}_1+\widetilde{a}_{2}\mathbbm{e}^{\mathbbm{i}\xi_L\Delta_{1,2}}+\widetilde{a}_{3}\mathbbm{e}^{\mathbbm{i}\xi_L\Delta_{1,3}}|^p+|\widetilde{a}_{2}|^p\label{simpleBeq}.
\end{align}
However, by Lemma \ref{lem: zerosetmeasurezero2} the set of $\xi_L\in\mathbb{R}$ such that (\ref{simpleBeq}) holds has measure zero, unless $c=\pm 1$. Since $\xi_L$ is a normal random variable, this happens with probability  zero. Thus, the proof is complete.
\end{proof}

\section{Proofs of auxiliary lemmas}\label{sec: proof aux}
In this section, we will provide proofs of the lemmas stated in section \ref{sec: exppoly}.

\begin{proof}[The Proof of Lemma \ref{lem: welldefE}]
By linearity, it suffices to show that $\alpha_1,\ldots,\alpha_N$ are nonzero numbers, then $p(\theta) = \sum_{k=1}^N \alpha_k\mathbbm{e}^{\mathbbm{i}\gamma_k\theta}$ is not the zero  function.
We will restrict attention to  the case where $|\gamma_N|>|\gamma_k|$ for all $1\leq k\leq N-1.$ The proofs of the other cases, where either $|\gamma_1|>|\gamma_k|$ for all $2\leq k \leq N$ or where $|\gamma_1|=|\gamma_N|>|\gamma_k|$ for all $2\leq k\leq N-1$ are similar.

For all $n\geq 1,$ the $n$-th derivative of $p(\theta)$ is given by
\begin{equation*}
p^{(n)}(\theta) = \sum_{k=1}^N \alpha_k\gamma_k^n\mathbbm{e}^{\mathbbm{i}\gamma_k\theta}.
\end{equation*}
Therefore, since $|\gamma_N|>\gamma_k$ for all $1\leq k \leq N-1,$ we have 
\begin{equation*}
\lim_{n\rightarrow\infty}\frac{p^{(n)}(0)}{\gamma_N^n}=\alpha_N.
\end{equation*}
In particular, there exists $n$ such that $p^{(n)}(0)\neq 0,$ and therefore $p(\theta)$ is not the zero function.
\end{proof}

\begin{proof}[The proof of Lemma \ref{lem: zerosetmeasurezero}]
In the case where $p=2m$ is even, then  by \eqref{square} and 
\eqref{degadd},   
$|p_i(\theta)|^{2m}\in\mathcal{E}(md_i)$ for each $i.$ Therefore, since $d_1>d_2,d_3,d_4,$ it follows from (\ref{degadd}) that 
\begin{equation*}
|p_1(\theta)|^{2m}+|p_2(\theta)|^{2m}-|p_3(\theta)|^{2m}-|p_4(\theta)|^{2m}
\end{equation*}
is an element of $\mathcal{E}(md_1)$ and therefore vanishes on a set of measure zero. 

Now consider the case where $p=2m+1$ is odd.
 Squaring both sides of  (\ref{noteq}) implies
\begin{equation*}
p_5(\theta)=2(|p_1(\theta)p_2(\theta)|^{2m+1}-|p_3(\theta)p_4(\theta)|^{2m+1}),
\end{equation*}
where $$p_5(\theta)\coloneqq|p_1(\theta)|^{4m+2}+|p_2(\theta)|^{4m+2}-|p_3(\theta)|^{4m+2}-|p_4(\theta)|^{4m+2}.$$ Thus, 
squaring both sides again gives
\begin{equation*}
p_6(\theta)=8|p_1(\theta)p_2(\theta)p_3(\theta)p_4(\theta)|^{2m+1},
\end{equation*}
where $$p_6(\theta)\coloneqq p_5(\theta)^2-4|p_1(\theta)p_2(\theta)|^{4m+2}-4|p_3(\theta)p_4(\theta)|^{4m+2}.$$
Therefore, squaring both sides one final time implies that 
\begin{equation*}
p_6(\theta)^2-64|p_1(\theta)p_2(\theta)p_3(\theta)p_4(\theta)|^{4m+2}=0.
\end{equation*}
However, since $d_1>d_2,d_3,d_4,$  applying (\ref{degtimes}), (\ref{square}), and (\ref{degadd}), implies that $(p_6(\theta)^2-64|p_1(\theta)p_2(\theta)p_3(\theta)p_4(\theta)|^2)\in\mathcal{E}((8m+4)d_1)$ and therefore vanishes on a set of measure zero.
\end{proof}

\begin{proof}[The Proof of Lemma \ref{lem: 4psolutions}]
If $\theta$ is a solution to \begin{equation*}
|p(\theta)|^p-|q(\theta)|^p=C,
\end{equation*}
then
\begin{equation*}
|p(\theta)|^{2p}-|q(\theta)|^{2p}-C^2=2|q(\theta)|^pC.
\end{equation*}
Therefore, 
\begin{equation*}
f(\theta)\coloneqq\left(|p(\theta)|^{2p}-|q(\theta)|^{2p}-C^2\right)^2-4|q(\theta)|^{2p}C^2=0.
\end{equation*}
Since \begin{equation*}
|p(\theta)|^2=a^2+b^2+2ab\cos(\theta)\quad\text{and}\quad |q(\theta)|^2=c^2+d^2+cd\cos(\theta),
\end{equation*} $f(\theta)$ is a trigonometric polynomial of degree at most $2p$. Thus, if $f(\theta)$ has more than $4p$ zeros in $[0,2\pi]$ it must be uniformly zero. Expanding out terms, we see 
\begin{align*}
&f(\theta)\\
=& ((a^2+b^2+2ab\cos(\theta))^p - (c^2+d^2+2cd\cos(\theta))^p-C^2)^2\\&-4C^2(c^2+d^2+2cd\cos(\theta))^p
\end{align*}
and so setting the $\cos^{2p}(\theta)$ coefficient equal to zero implies
\begin{equation*}
0=(2^pa^pb^p-2^pc^pd^p)^2
\end{equation*}
which implies $ab=cd$ since $p$ is odd. If $p\geq 3,$ then we have $2(p-1)>p.$ Therefore,
\begin{equation*}
    f_1(\theta)\coloneqq((a^2+b^2+2ab\cos(\theta))^p - (c^2+d^2+2cd\cos(\theta))^p-C^2)^2
\end{equation*}
has strictly greater degree than 
\begin{equation*}
f_2(\theta)\coloneqq 4C^2(c^2+d^2+2cd\cos(\theta))^p
\end{equation*}
and so $f_1(\theta)$ must also be uniformly zero. 
Setting the $\cos^{p-1}(\theta)$ coefficient of $f_1(\theta)^{1/2}$ equal to zero yields
\begin{equation*}
\left(p(a^2+b^2)(2ab)^{p-1}- p(c^2+d^2)(2cd)^{p-1}\right)^2=0,
\end{equation*}
but since $ab=cd$ this implies that $a^2+b^2=c^2+d^2.$
On the other hand, if $p=1,$ using the fact that $ab=cd$ we see that 
\begin{equation*}
0=\left(a^2+b^2-(c^2+d^2)-C\right)^2- 4C^2(c^2+d^2+2cd\cos(\theta)),
\end{equation*}
which can only happen for all $\theta$ if $C=0$ and $a^2+b^2=c^2+d^2.$
\end{proof}

\begin{proof}[The Proof of Lemma \ref{lem: zerosetmeasurezero2}]
Let
\begin{equation*}
p(\theta)= a+b\mathbbm{e}^{\mathbbm{i}\theta}+c\mathbbm{e}^{\mathbbm{i}(\gamma+1)\theta}\end{equation*} and \begin{equation*} q(\theta)=\kappa a+\frac{1}{\kappa}b\mathbbm{e}^{\mathbbm{i}\theta}+\kappa c\mathbbm{e}^{\mathbbm{i}(\gamma+1)\theta}.
\end{equation*}
Then squaring both sides of \eqref{measurezero} yields,
\begin{equation*}
|p(\theta)|^{2p}-|q(\theta)|^{2p}-C^2=2|q(\theta)|^pC,
\end{equation*}
and therefore if $\theta$ satisfies (\ref{measurezero}) it is a solution to $f(\theta)=0,$ where 
\begin{equation*}
f(\theta) \coloneqq \left(|p(\theta)|^{2p}-|q(\theta)|^{2p}-C^2\right)^2-4|q(\theta)|^{2p}C^2.
\end{equation*}
$f(\theta)$ is an element of the class $\mathcal{E}$ of generalized exponential polynomials introduced earlier. Thus, it will follow that $f$  vanishes on a set of measure zero as soon as we show that  $f$ is not uniformly zero. We will verify that  that the lead cofficient of $f$ is nonzero unless $\kappa=\pm1$.
Using the identity $\mathbbm{e}^{\mathbbm{i}x}+\mathbbm{e}^{-\mathbbm{i}x}=2\cos(x)$ we see that
\begin{align*}
|p(\theta)|^2 &= a^2+b^2+c^2+2ab\cos(\theta)\\&\quad\quad+2bc\cos(\gamma\theta)+2ac\cos((\gamma+1)\theta)
\end{align*}
and likewise  is given by 
\begin{align*}
 |q(\theta)|^2&=\kappa^2a^2+\frac{1}{\kappa^2}b^2+\kappa^2c^2\\&\quad\quad +2ab\cos(\theta)+2bc\cos(\gamma\theta)+2\kappa^2ac\cos((\gamma+1)\theta).
\end{align*}
Therefore, the lead coefficient of $f(\theta)$ vanishes if and only if $\kappa^2=1.$

\end{proof}

\section{Signal Synthesis}\label{sec: numerics}
In order to illustrate the ability of the measurements \ref{def: gabor measurements} to characterize sparse signals, we verify empirically that signals with the same measurement differ only by global reflections, translations and sign changes. Specifically, given a signal $x(t)=\sum_{j=1}^k a_j \delta_{v_j}(t)$ and a finite collection of measurements 
\begin{equation*}
    \{\|g_{s_k,\xi_k}\star x\|_p\}_{k=1}^N
\end{equation*}
we use a greedy  scheme to find another signal $\widetilde{x}(t)=\sum_{j=1}^k \widetilde{a}_j \delta_{v_j}(t)$ which minimizes
\begin{equation*}
    \sum_{k=1}^N\left|\|g_{s_k,\xi_k}\star x\|_p-\|g_{s_k,\xi_k}\star \widetilde{x}\|_p\right|^2.
\end{equation*}
Figure \ref{fig:diracs} shows the result of several signals $\tilde{x}$ which were obtained by solving this minimization problem.

\begin{figure}
    \centering
    \subfloat{\label{fig:syn_directed}\includegraphics[scale=0.15]{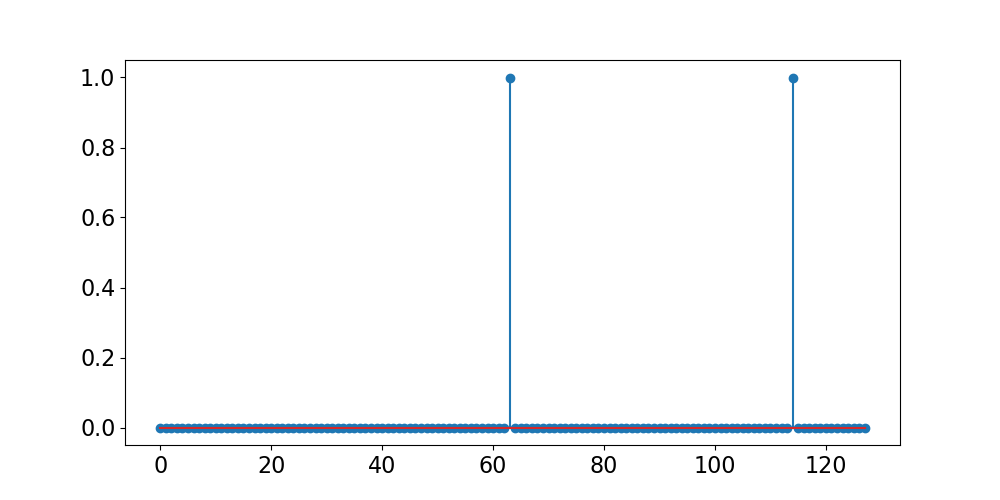}}
    \subfloat{\label{fig:syn_sym}\includegraphics[scale=0.15]{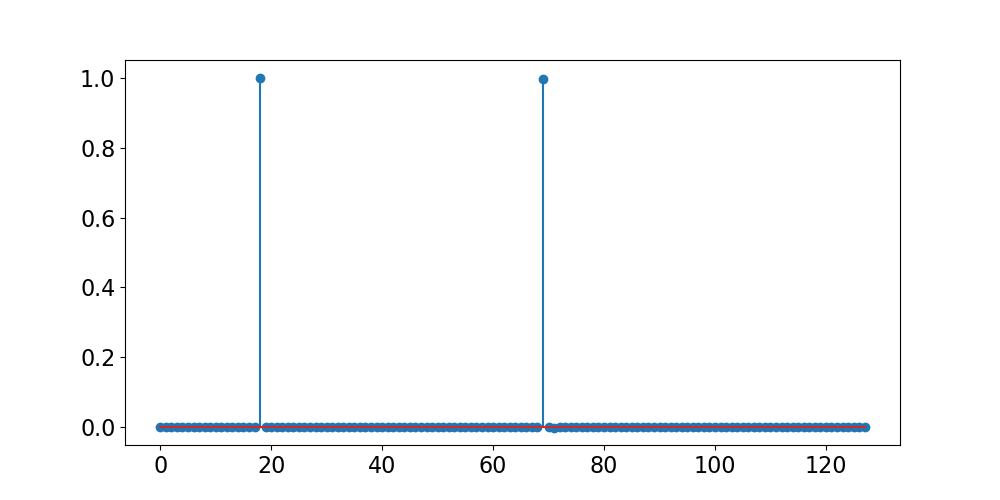}}\newline
\subfloat{\label{fig:syn_directed}\includegraphics[scale=0.15]{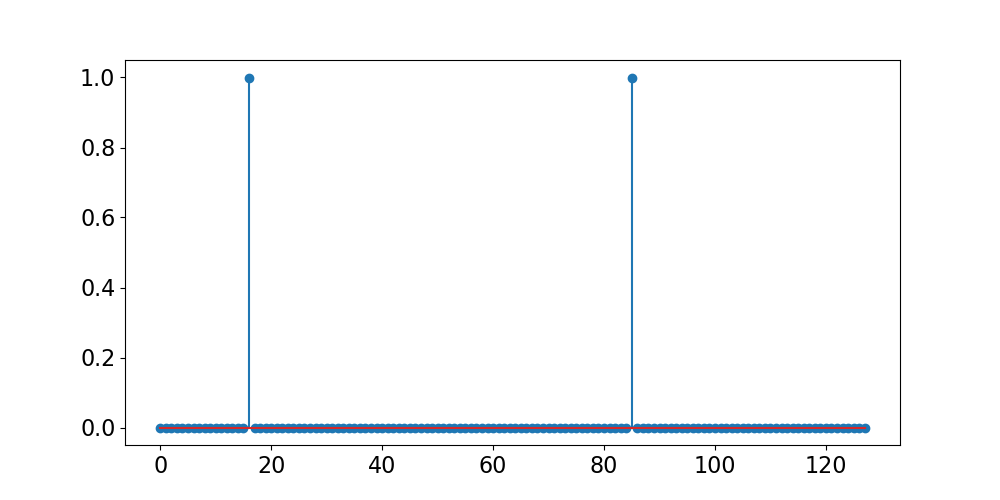}}
    \subfloat{\label{fig:syn_sym}\includegraphics[scale=0.15]{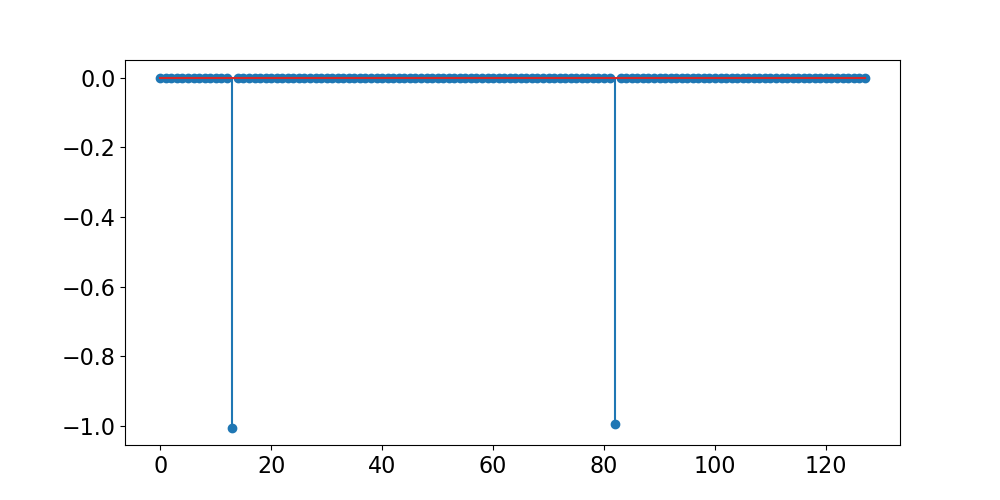}}\newline
    \subfloat{\label{fig:syn_directed}\includegraphics[scale=0.15]{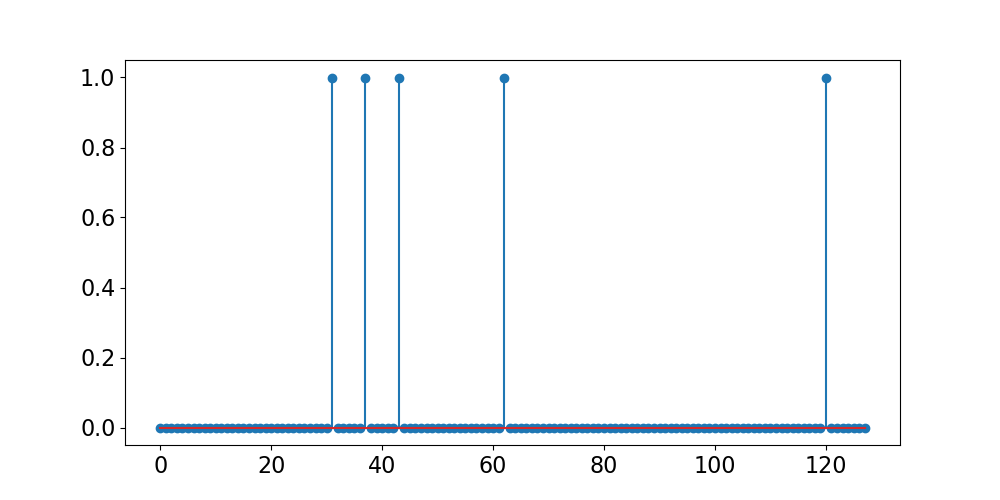}}
    \subfloat{\label{fig:syn_sym}\includegraphics[scale=0.15]{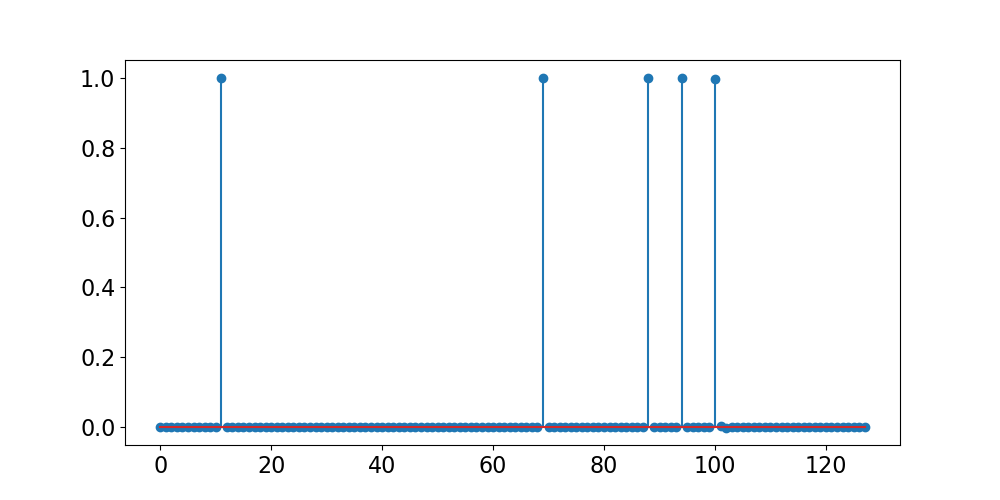}}
    \caption{Sparse signals reconstructed up to a global reflections, translations, and sign changes. Originals signals are on the left and reconstructed signals are on the right.}
    \label{fig:diracs}
\end{figure}
 
In all of experiments, we set the signal lengther to be $N=128$ and used two frequencies $\xi=(41/N)\pi,(23/N)\pi$. For the first signal we used scales $s=1, 14, 27, 40, 53, 2, 65, 96, 106$. For the second singal we used $s=1, 6, 11, 16, 21, 26, 31, 36, 41, 46, 51, 56, 61, 2, 4, 65, 96, 106$, and for the third we used  $s=1, 7, 13, 19, 25, 31, 37, 43, 49, 55, 61, 2, 4, 6, 65, 96, 106.$
\bibliographystyle{plain}

\end{document}